\newtheorem{theorem}{\bf Theorem}
\newenvironment{proof}{\paragraph{Proof:}}{\hfill$\square$}
\definecolor{AAA}{rgb}{1.0, 0.13, 0.32}
\definecolor{BBB}{rgb}{0.2, 0.1, 1}
\definecolor{ccc}{rgb}{0, 0.5, 0.25}
\begin{document}
\baselineskip11pt

\begin{frontmatter}

\title{Toward Precise Curve Offsetting Constrained to Parametric Surfaces
}

\author[a]{Jin Zhao\fnref{equal1}}
\author[a]{Pengfei Wang\fnref{equal1}}
\author[b]{Shuangmin Chen}
\author[a]{Jiong Guo}
\author[a]{Shiqing Xin\corref{cor1}}
\author[a]{Changhe Tu}
\author[c]{Wenping Wang}

\fntext[equal1]{These authors contributed equally to this work.}
\cortext[cor1]{Corresponding author}

\address[a]{School of Computer Science and Technology, Shandong University, Qingdao, Shandong 266237, China}
\address[b]{College of Information Science and Technology, Qingdao University of Science and Technology, Qingdao, Shandong 266061, China}
\address[c]{Department of Computer Science and Engineering, Texas A\&M University, College Station, TX 77843, USA}

\begin{abstract} 
Computing offsets of curves on parametric surfaces is a fundamental yet challenging operation in computer-aided design and manufacturing. Traditional analytical approaches suffer from time-consuming geodesic distance queries and complex self-intersection handling, while discrete methods often struggle with precision. In this paper, we propose a totally different algorithm paradigm. Our key insight is that by representing the source curve as a sequence of line-segment primitives, the Voronoi decomposition constrained to the parametric surface enables localized offset computation. Specifically, the offsetting process can be efficiently traced by independently visiting the corresponding Voronoi cells. To address the challenge of computing the Voronoi decomposition on parametric surfaces, we introduce two key techniques. First, we employ intrinsic triangulation in the parameter space to accurately capture geodesic distances. Second, instead of directly computing the surface-constrained Voronoi decomposition, we decompose the triangulated parameter plane using a series of plane-cutting operations. Experimental results demonstrate that our algorithm achieves superior accuracy and runtime performance compared to existing methods. We also present several practical applications enabled by our approach.

\end{abstract}

\begin{keyword} Offset curve extraction, parametric surface, geodesic, voronoi diagram
\end{keyword}

\end{frontmatter}

\section{Introduction}

The offsetting operation of curves on surfaces (as illustrated in Figure~\ref{FIG:example}) is a fundamental yet challenging geometric operation in computer-aided design and manufacturing, where both accuracy and run-time performance are crucial. It has garnered significant attention due to its wide range of applications, including surface blending, surface coverage, and path planning~\cite{doi:10.1177/0278364905059058,10.1145/77055.77057,LEE2003511}. Even if the source curve is smooth and defined in a 2D plane, the resulting offset may include cusps. When the base domain is a curved surface, the complexity increases due to the intrinsic curvature of the surface geometry.

To be detailed, the challenges arise from the following aspects. First, surface-constrained offsets require computing geodesic distances, but geodesic distances are significantly more computationally expensive than straight-line distances. Second, the offsetting operation can be explained based on the Minkowski sum, which essentially ``expands'' the source curve by a fixed distance in every direction, creating a ``thickened'' version of the original curve. Therefore, there may not exist a one-to-one mapping between them, leading to cusps in the offset.

\begin{figure}[h]
	\centering
\begin{overpic}
[width=0.6\linewidth]{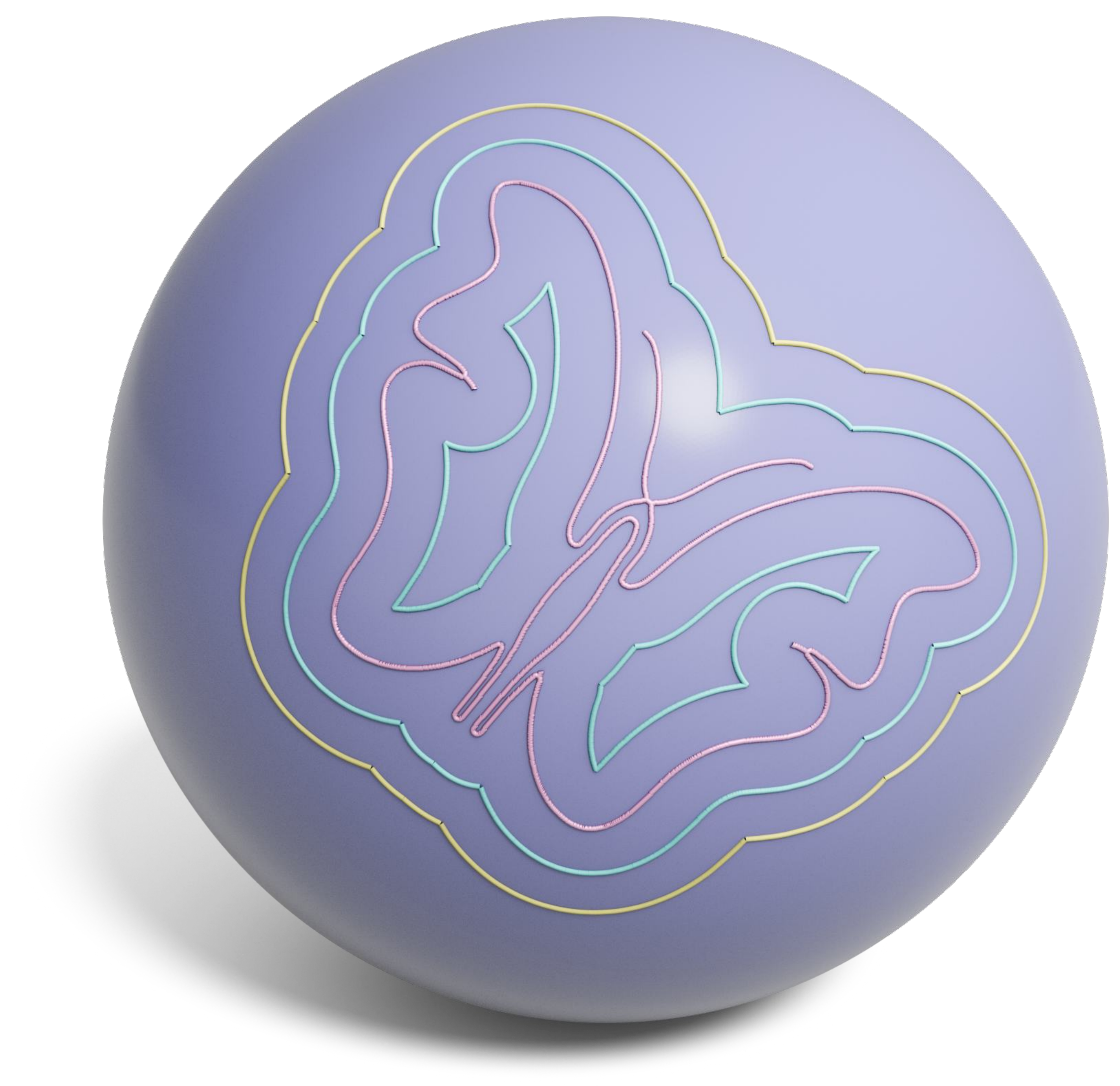}
\end{overpic}
\vspace{2mm}
\caption{Offset curves on a spherical parametric surface. The source curve (shown in pink) and its offset curves at two different distances (shown in cyan and yellow, respectively) are illustrated on the spherical surface.
}
\label{FIG:example}
\end{figure}

Current approaches typically employ a two-phase strategy: initially generating raw offset curves via geodesic direction displacement, followed by resolving self-intersections through geometric operations. Although conceptually elegant, this bifurcated approach introduces substantial practical challenges. The generation of raw offsets via numerical integration or iterative methods imposes considerable computational cost, while the resolution of self-intersections frequently exhibits numerical instability, especially in regions characterized by high curvature.

Moreover, despite geodesic computation being an intrinsic property of parametric surfaces—independent of their spatial embedding—existing methods rarely leverage this intrinsic characteristic. This deficiency presents an opportunity to develop more principled approaches that operate directly on the surface's intrinsic geometry, thereby enhancing both computational accuracy and algorithmic efficiency.

In this paper, we propose a novel approach to geodesic curve offsetting on parametric surfaces. Our method is built on two key insights: First, we recognize that geodesic distance computation is an intrinsic problem independent of the parametric surface's embedding. Consequently, we focus on the parametric space with an induced metric rather than the embedded parametric surface. We extend existing geodesic distance computation methods to calculate distances between arbitrary point pairs on the parametric space. Second, we observe that when calculating offsets and Voronoi diagrams of several primitives, the offset of a specific primitive is constrained to its Voronoi cell. Based on this observation, we discretize the source curve into multiple segments, calculate the Voronoi diagram of these segments, and then extract the offset of each segment within its corresponding cell. Through these two insights, our approach achieves efficient extraction and composition of complete offset curves while naturally handling self-intersections.

The key innovations of our method are:

\begin{itemize}
\item \textbf{Parameter Space Geodesic Computation}: We propose a novel approach that computes geodesics by operating directly in the parameter space equipped with a surface-induced metric.
\item \textbf{Voronoi-Guided Decomposition}: Through theoretical analysis of offset curves and Voronoi diagrams, we establish that offset regions remain confined within their corresponding Voronoi cells, enabling localized and parallel computation.
\item \textbf{No Specialized Self-Intersection Handling}: Our method does not require specialized handling for self-intersection issues. 
\end{itemize}

\section{Releated Work}
\subsection{Geodesics Distance Calculation}
Existing approaches for computing geodesics on parametric surfaces can be broadly classified into three categories: analytical methods, numerical approaches, and discrete approximation methods.

\subsubsection{Analytical Methods} 
The analytical approaches, as presented by Do Carmo~\cite{do2016differential}, offer theoretically elegant solutions but encounter substantial practical limitations. These methods exhibit considerable complexity, and closed-form solutions for geodesics cannot be derived for general surfaces. This fundamental constraint has necessitated the development of more pragmatic computational methodologies.

\subsubsection{Numerical Methods} 
Numerical methods have gained widespread adoption in recent decades. Beck et al.~\cite{Beck198618} computed geodesic paths on bicubic spline surfaces using fourth-order Runge–Kutta methods. Patrikalakis and Badris~\cite{Patrikalakis198939} examined geodesic curves on parametric surfaces during their construction of offset curves on Rational B-spline surfaces. Sneyd and Peskin~\cite{doi:10.1137/0911014} investigated geodesic path computation on generalized cylinders employing second-order Runge–Kutta methods. For shortest path problems, Maekawa~\cite{10.1115/1.2826919} introduced an approach based on relaxation methods utilizing finite difference discretization, while Kasap et al.~\cite{KASAP20051206} presented a methodology for solving nonlinear differential equations through finite-difference and iterative techniques.

\subsubsection{Discrete Approximation Methods} 
The third approach involves first converting parametric surfaces into mesh surfaces and then applying mesh-based geodesic computation methods. Many accurate discrete methods approach geodesics or the shortest paths on tessellated surfaces~\cite{tucker1997forming,RAVIKUMAR2003119}, polygonal surfaces~\cite{polthier2006straightest,KANAI2001801} and triangular meshes~\cite{MARTINEZ2005667,Surazhsky2005553}.
Recent work~\cite{10.1145/3414685.3417839} achieves efficient geodesic computation through edge flipping operations, demonstrating significant computational performance. 
As comprehensively reviewed by Bose et al.~\cite{BOSE2011486}, these algorithms are differentiated based on theoretical time complexity and approximation ratio. Discrete geodesics have been gaining attention as computers become increasingly more powerful and discretized models become more prevalent in geometric modeling. However, the discrete geodesics cannot be computed directly on the original smooth surface, which limits their application in scenarios requiring high accuracy, and the conversion from parametric surfaces to mesh surfaces introduces significant precision loss.

\vspace{5mm}
Despite the numerous existing approaches for geodesic computation, it remains challenging to efficiently compute geodesic distances between arbitrary points on parametric surfaces while maintaining high accuracy for applications requiring fast query responses.

\subsection{Geodesic Offset Curve on Surface}
The problem of computing offset curves has a long history of development in geometric modeling and processing~\cite{10.1145/116873.116880,586019,MAEKAWA1998437,PHAM1992223}, due to its essential role in applications such as toolpath generation for CNC machining~\cite{10.5555/108340} and geometric tolerancing. Since our work focuses on geodesic offset computation on surfaces, we mainly review previous approaches for computing geodesic offset curves.

The study of geodesic offset curves was pioneered by Patrikalakis and Bardis~\cite{Patrikalakis1989OffsetsOC}, who introduced this problem to geometric modeling. Wolter and Tuohy~\cite{10.1007/BF01200103} later revisited this problem as a special case of approximating procedurally defined curves on surfaces. Brunnet~\cite{10.5555/647587.730878} investigated geodesic offsets specifically on surfaces of revolution, reducing the problem to solving zeros of complex integral functions. Due to the computational complexity, these approaches typically employed the Runge-Kutta scheme for efficient approximation of geodesic offset curves.

Ulmet~\cite{GeodesicOffsetsOfSpline} proposed two simplified alternatives to geodesic offset computation. The first approach computes a planar offset curve in the parameter domain and maps it to the surface. However, this method suffers from parametrization-dependent issues - regions of the same size in the $(u,v)$-domain can map to patches of considerably different sizes on the surface, leading to non-uniform gaps between the curve and its offset. The second method projects a spatial offset curve onto the surface, which presents a non-trivial computational challenge. Tam et al.~\cite{Tam2004AGA} attempted to simplify this by intersecting the surface with the normal plane of the curve and then moving along the cross-sectional curve by the offset distance. While conceptually simpler than surface projection, this method proved unsuitable for generating continuous offset curves on surfaces.

Beyond the basic offset computation, a significant challenge lies in the handling of self-intersections. Most methods perform trimming in the $(u,v)$-parameter domain~\cite{Tam2004AGA} or in flattened planar domains for mesh surfaces~\cite{XU2015131,doi:10.1177/0954405413492965}. However, due to numerical errors in offset curve sampling and the computational difficulty in measuring geodesic distances on the surface, the offset trimming can be highly unstable, particularly when offset curves have tangential intersections. For applications requiring constant scallop-height maintenance~\cite{FENG2002647,10.1115/1.2826244,10.1115/1.2901938}, the offset distance is computed as a function $d(t)$, with different formulations depending on each method's approximation of local surface geometry.

\vspace{5mm}
Compared with the numerous research results in Euclidean spaces~\cite{10845125}, there are relatively few studies on geodesic offset computation for surfaces~\cite{10.5555/647587.730878,Patrikalakis1989OffsetsOC,ComputationofMedialCurvesinSurfaces,10.1007/BF01200103}, while most work on geodesic offset curves has focused on polygonal surfaces such as triangular meshes~\cite{HOLLA20031099,doi:10.1073/pnas.95.15.8431,7102776,XIN20111468}. These limitations in existing methods motivate the need for more efficient approaches that can handle geodesic offset computation directly on parametric surfaces while maintaining robust handling of self-intersections.

\subsection{Intrinsic Triangulation}
\label{sec:Intrinsic Triangulation}

Intrinsic triangulation fundamentally differs from its extrinsic counterpart by focusing solely on the intrinsic properties of the surface, independent of any particular embedding~(Figure.~\ref{FIG:intrinsicTriangulation}). It can be completely characterized by three key components: (1) the mesh connectivity describing how vertices are connected to form triangles, (2) the lengths of edges connecting adjacent vertices, and (3) the triangle inequality constraints that these distances must satisfy. These edge lengths and connectivity patterns fully define the triangulation without reference to vertex positions in any ambient space.

\begin{figure}[h]
	\centering
\begin{overpic}
[width=0.98\linewidth]{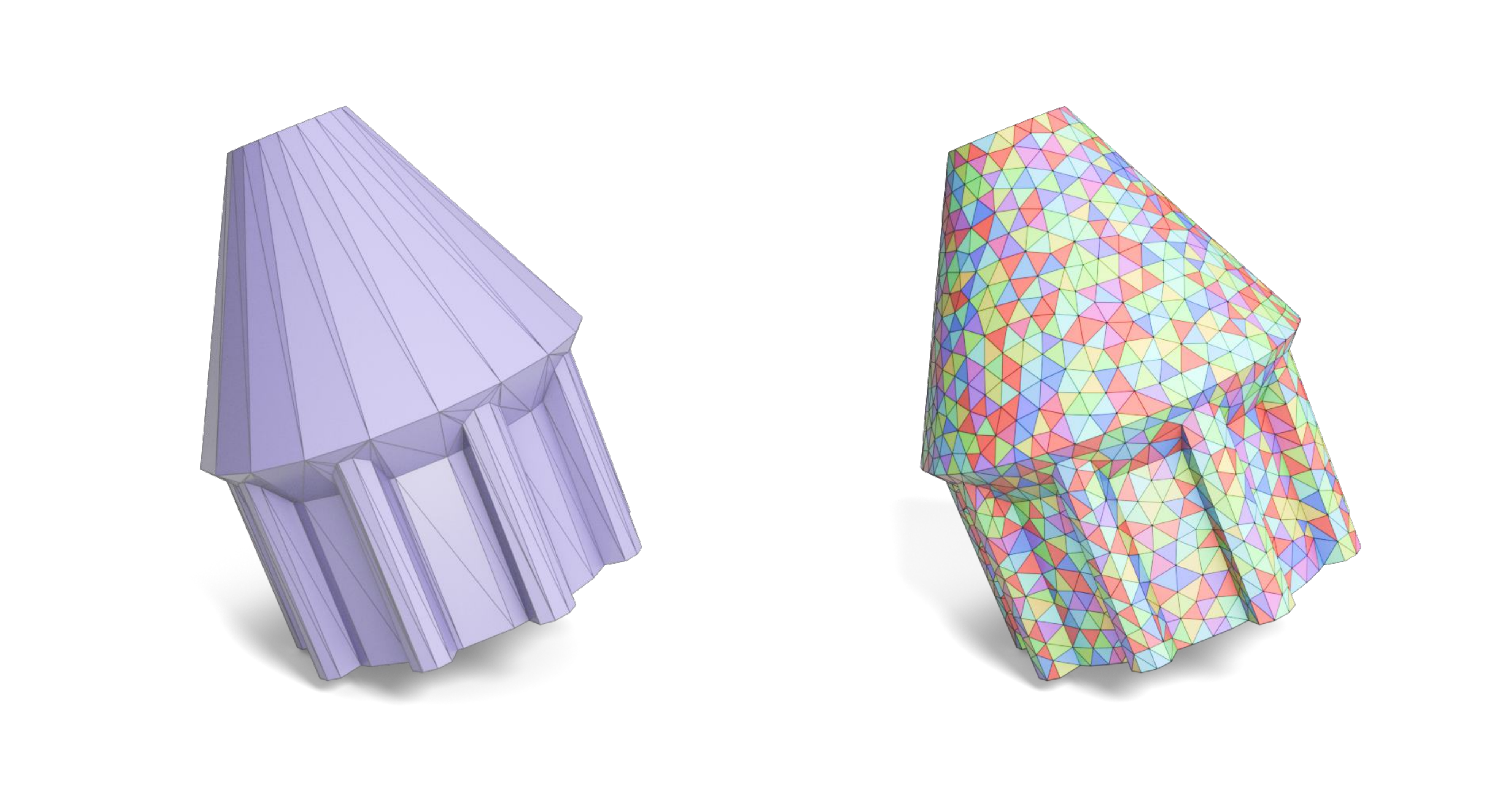}
\end{overpic}
\vspace{2mm}
\caption{The input mesh (left) and its intrinsic triangulation (right, with different intrinsic triangles shown in distinct colors). An intrinsic triangulation is fully defined by its mesh connectivity and edge lengths, which satisfy triangle inequalities, rather than depending on vertex positions in space.}
\label{FIG:intrinsicTriangulation}
\end{figure}

Sharp et al.~\cite{Sharp:2019:NIT} introduced a novel data structure that efficiently represents and manipulates such intrinsic triangulations. This approach is particularly valuable for computational tasks that rely only on intrinsic surface properties, such as geodesic distance computation and vector field processing. For meshes with poor quality elements that would typically hinder numerical computations, intrinsic triangulation provides a unique solution by improving mesh quality without altering the underlying geometry. This stands in stark contrast to traditional remeshing approaches, which inevitably must balance element quality against geometric fidelity.

Furthermore, intrinsic triangulation offers several key advantages in computational settings. For finite element methods, it simultaneously provides accurate geometry representation and high-quality elements for computation, since all geometric quantities can be derived purely from edge lengths. Its ability to maintain geometric fidelity while enabling more accurate discretization makes it especially suited for applications ranging from geodesic computations to vector field processing. Most notably, this approach eliminates the traditional trade-off between mesh quality and geometric accuracy, offering a powerful tool for processing geometrically complex or poorly tessellated surfaces.

\section{Methods}
\subsection{Geodesic Computation via Parameter Space Triangulation}
\label{sec:Geodesic Computation via Parameter Space Triangulation}

Computing geodesics on parametric surfaces is fundamentally an intrinsic problem—it depends solely on the surface's metric structure, independent of the surface's spatial embedding. However, most existing approaches fail to fully exploit this intrinsic characteristic. A prevalent methodology discretizes the parametric surface into a triangle mesh before applying mesh-based algorithms such as the edge flipping method of Sharp and Crane~\cite{10.1145/3414685.3417839}, as illustrated in Figure~\ref{FIG:flip} (a)(b). While this remeshing approach facilitates efficient computation, it introduces unnecessary discretization artifacts by compromising the surface's intrinsic metric structure.

\begin{figure}[h]
	\centering
\begin{overpic}
[width=0.97\linewidth]{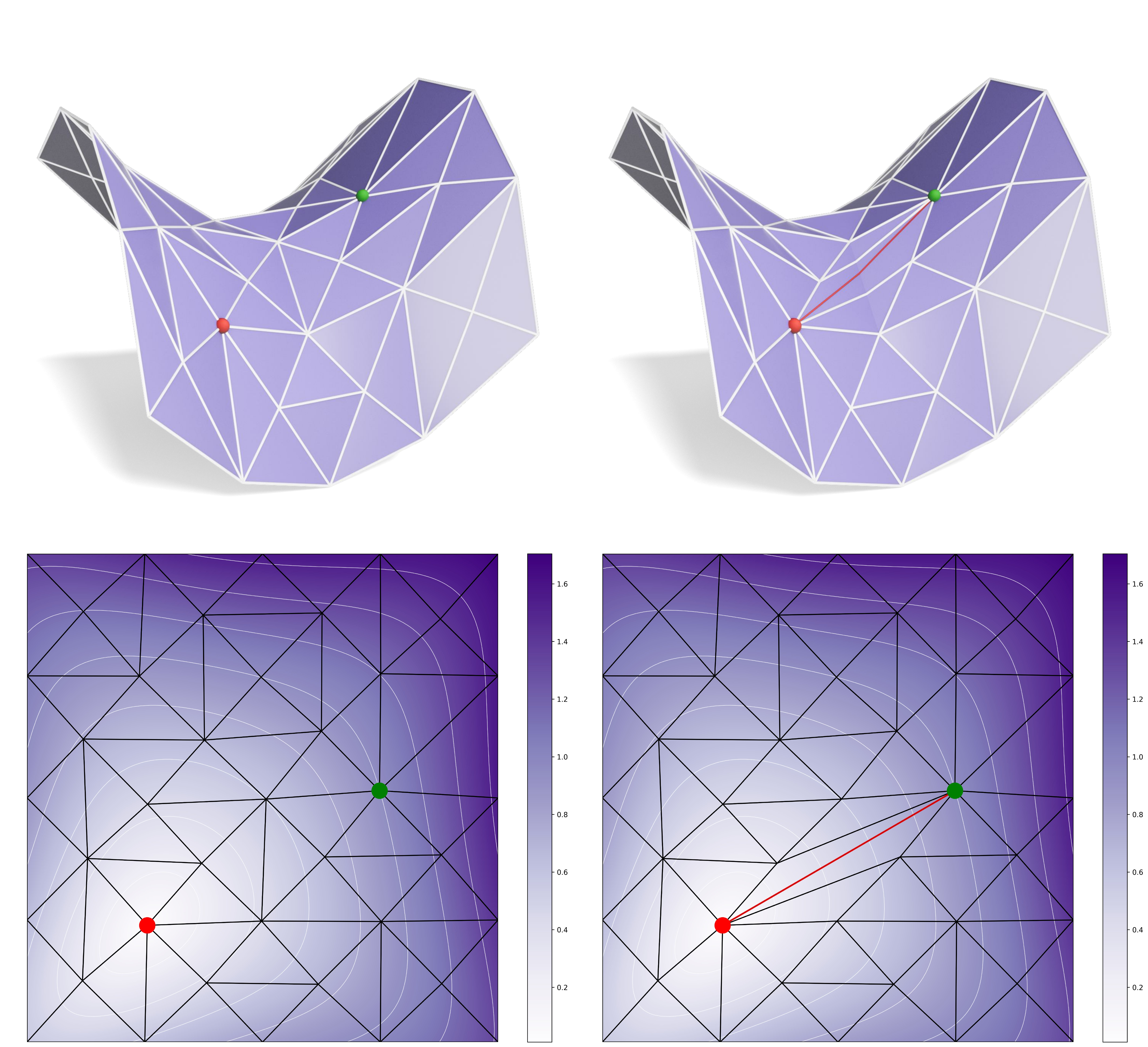}
\put(20,45){(a)}
\put(73,45){(b)}
\put(20,-5){(c)}
\put(73,-5){(d)}
\end{overpic}
\vspace{2mm}
\caption{Computing geodesic paths between a start point (red) and end point (green) on a parametric surface. A common approach is to first discretize the parametric surface into a triangle mesh (a) and then apply mesh-based geodesic computation methods such as~\cite{10.1145/3414685.3417839} that use edge flipping. However, this discretization process introduces additional errors that affect the accuracy of geodesic distance computations. Instead, we directly compute an intrinsic triangulation of the parametric surface in the parameter domain (c), where each edge is assigned a length corresponding to the geodesic distance between its endpoints on the surface. The geodesic distance can then be accurately computed through intrinsic triangle flipping operations (d) directly in the parameter space. The color gradient visualizes the geodesic distance field from the start point, with contour lines representing equidistant paths.
}
\label{FIG:flip}
\end{figure}
Rather than compromising accuracy through mesh approximation, we propose to operate directly in the parameter space while preserving the intrinsic geometric properties of the surface. For the parametric space, we introduce an induced metric—the distance between any two points in the parametric space is defined as the shortest paths distance between their corresponding points on the parametric surface. For computational convenience, we construct an intrinsic triangulation of the parametric space, where edge lengths are determined by the induced metric.

We then extend the intrinsic edge-flipping algorithm of Sharp and Crane~\cite{10.1145/3414685.3417839} to operate within our metric-equipped parameter space, as shown in Figure~\ref{FIG:flip}(c)(d). This extension allows the computation of geodesic distances between arbitrary points through intrinsic triangle flipping operations directly in the parameter domain.

The theoretical foundation of our method is predicated on the observation that this metric space naturally admits triangulations
where edges represent shortest paths. We establish the following fundamental property to validate our parameter space construction:

\begin{theorem}
For any three points $p_1$, $p_2$, and $p_3$ in the parameter space equipped with the induced metric, their distances satisfy the triangle inequality:
$$d_g(p_1,p_2) + d_g(p_2,p_3) \geq d_g(p_1,p_3)$$   
where $d_g(p_i,p_j)$ denotes the shortest path distance between the surface points corresponding to parameters $p_i$ and $p_j$.
\end{theorem}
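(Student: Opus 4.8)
The plan is to reduce the statement to the elementary fact that the intrinsic (shortest-path) distance on a connected surface is a genuine metric, and then to establish the triangle inequality for that distance by concatenating curves. First I would set up notation: let $S$ denote the parametric surface and let $P_i$ be the point of $S$ corresponding to the parameter $p_i$, so that, by the very definition given in the statement, $d_g(p_i,p_j)$ is the infimum of the arc lengths of all admissible (say, piecewise-$C^1$) curves on $S$ joining $P_i$ to $P_j$. Since the parametrization carries the induced metric on the parameter domain to the first fundamental form on $S$, it is a length-preserving bijection; hence nothing is lost by arguing entirely on $S$, and the claim is precisely $d_S(P_1,P_3)\le d_S(P_1,P_2)+d_S(P_2,P_3)$ for the intrinsic distance $d_S$.

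Next I would fix an arbitrary $\varepsilon>0$ and, invoking the definition of $d_g$ as an infimum, choose an admissible curve $\alpha$ on $S$ from $P_1$ to $P_2$ with $L(\alpha)\le d_g(p_1,p_2)+\varepsilon/2$ and an admissible curve $\beta$ from $P_2$ to $P_3$ with $L(\beta)\le d_g(p_2,p_3)+\varepsilon/2$. The concatenation $\gamma=\alpha\ast\beta$ --- traverse $\alpha$, then $\beta$ --- is again a piecewise-$C^1$ curve on $S$ (the only new breakpoint is $P_2$, where piecewise smoothness is preserved), and arc length is additive under concatenation, so $L(\gamma)=L(\alpha)+L(\beta)\le d_g(p_1,p_2)+d_g(p_2,p_3)+\varepsilon$. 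Because $\gamma$ joins $P_1$ to $P_3$, it is one of the competitors in the infimum defining $d_g(p_1,p_3)$, whence $d_g(p_1,p_3)\le d_g(p_1,p_2)+d_g(p_2,p_3)+\varepsilon$. Letting $\varepsilon\to 0$ yields the theorem; the degenerate cases in which two of the $p_i$ coincide or a curve is constant are immediate.

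The main points requiring care --- and where I expect the only real obstacle --- are bookkeeping rather than deep: one must ensure the surface is path-connected so that the infima are over nonempty sets and the distances are finite (automatic for a single parametric patch over a connected domain), and one must confirm that the admissible curve class is closed under concatenation and that arc length is additive across the join (both standard for the rectifiable / piecewise-$C^1$ class implicitly used in defining $d_g$). I would deliberately argue with $\varepsilon$-approximate minimizers rather than with honest minimizing geodesics, which avoids any Hopf--Rinow-type completeness hypothesis; if one instead prefers to cite the existence of minimizing geodesics, the same concatenation argument applies verbatim with $\varepsilon=0$. Finally I would remark that symmetry $d_g(p_i,p_j)=d_g(p_j,p_i)$ follows by reversing curves and the remaining metric axioms are elementary, so that, together with this theorem, $d_g$ is a bona fide metric --- which is exactly what is needed for the parameter domain to admit the intrinsic triangulation (with edge lengths satisfying the triangle inequality) that the algorithm relies on.
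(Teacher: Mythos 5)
Your proposal is correct and follows essentially the same route as the paper: concatenate a path from $p_1$ to $p_2$ with one from $p_2$ to $p_3$ and compare against the infimum defining $d_g(p_1,p_3)$. The only difference is that you work with $\varepsilon$-approximate minimizers while the paper assumes the shortest paths $\gamma_{12}$, $\gamma_{23}$, $\gamma_{13}$ actually exist; your version is the more careful one, since it sidesteps the implicit completeness assumption, but the underlying argument is identical.
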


\begin{proof}
Let $\gamma_{12}$ and $\gamma_{23}$ be the shortest paths connecting $p_1$ to $p_2$ and $p_2$ to $p_3$ respectively. Let $\gamma_{13}$ be the shortest path from $p_1$ to $p_3$, as shown in Figure~\ref{FIG:triangle}. Then:
\begin{itemize}
    \item The concatenated path $\gamma_{12} \cup \gamma_{23}$ forms a valid path from $p_1$ to $p_3$.
    \item By definition, $\gamma_{13}$ is the shortest path between $p_1$ and $p_3$.
    \item Therefore: $d_g(p_1,p_3) = \text{length}(\gamma_{13}) \leq \text{length}(\gamma_{12}) + \text{length}(\gamma_{23}) = d_g(p_1,p_2) + d_g(p_2,p_3)$
\end{itemize}
\end{proof}

\begin{figure}[h]
	\centering
\begin{overpic}
[width=0.6\linewidth]{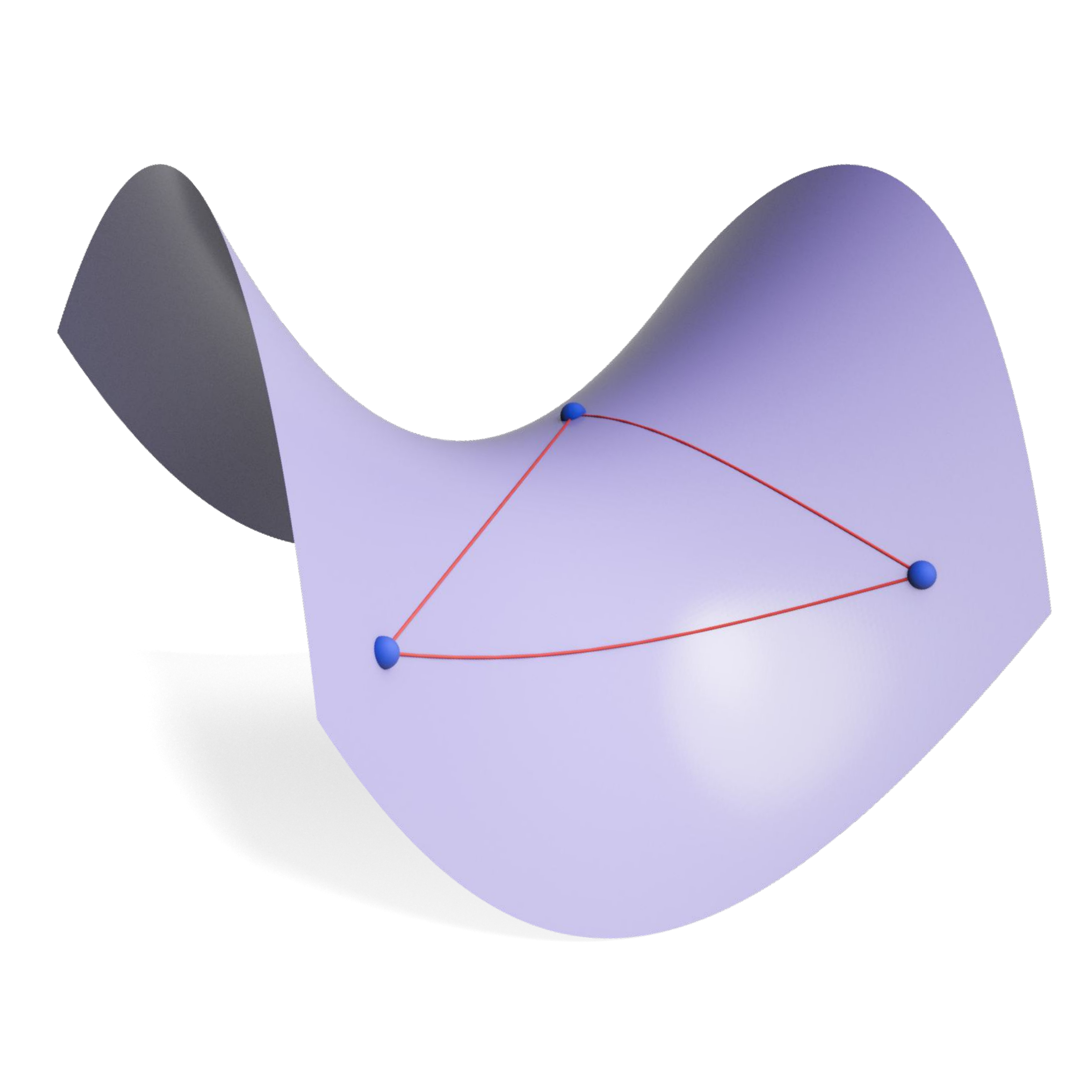}
\put(32,21){$p_1$}
\put(85,29){$p_2$}
\put(55,53){$p_3$}
\end{overpic}
\vspace{2mm}
\caption{Intrinsic Triangle on a Parametric Surface. The geodesic distance between points $p_i$ and $p_j$ is denoted by $d_g(p_i,p_j)$. The three geodesic paths forming the edges of the intrinsic triangle satisfy the triangle inequality, allowing the triangle to be developed onto a plane while preserving the lengths of its three boundary edges.
}
\label{FIG:triangle}
\end{figure}

This triangle inequality ensures that the metric space is well-defined and provides theoretical justification for the continuity of the distance field, which is essential for our subsequent processing. Most significantly, it guarantees the validity of our triangulation and the convergence of the edge-flipping operations. By operating with this induced metric structure rather than a mesh approximation, our method preserves both the computational efficiency of edge-flipping and the precise metric properties of the original surface.

\subsection{Voronoi-Guided Offset Computation}
\label{sec:Voronoi-Guided Offset Computation}

There exists a fundamental relationship between Voronoi diagrams and offset operations. Consider a set of disjoint primitives in a two-dimensional space and their generalized Voronoi diagram. When computing a global offset structure from multiple primitives, we observe a significant property: within the final offset structure, the portion originating from any specific primitive remains confined to that primitive's Voronoi cell.
This observation can be formalized as follows:

\begin{theorem}
Let $\{P_i\}$ be a set of primitives with corresponding Voronoi cells $\{V_i\}$, and let $O(P_i,d)$ denote the $d$-offset curve of primitive $P_i$ in the global offset structure. Then:
$$O(P_i,d) \subseteq V_i$$
\end{theorem}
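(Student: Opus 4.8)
The plan is to unpack the definitions and show that every point of the offset curve $O(P_i,d)$ is, by construction, closer to $P_i$ than to any other primitive $P_j$, which is exactly the defining condition for membership in the Voronoi cell $V_i$. First I would fix notation: for a point $x$ in the ambient space and a primitive $P_k$, write $\mathrm{dist}(x,P_k) = \inf_{q \in P_k} d_g(x,q)$ for the (geodesic) distance from $x$ to the primitive, so that the Voronoi cell is $V_i = \{x : \mathrm{dist}(x,P_i) \le \mathrm{dist}(x,P_j) \ \text{for all } j\}$. Next I would pin down what ``$O(P_i,d)$ in the global offset structure'' means: a point $x$ belongs to it precisely when $x$ lies on the $d$-level set of $P_i$ (i.e. $\mathrm{dist}(x,P_i) = d$) \emph{and} $x$ survives the global trimming, meaning no other primitive is strictly closer — otherwise $x$ would be ``inside'' the thickened region of some $P_j$ and thus not on the boundary of the union. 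So the global offset structure is $\{x : \min_k \mathrm{dist}(x,P_k) = d\}$, and the part of it attributed to $P_i$ consists of those $x$ where this minimum is realized by $P_i$.

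Given that characterization, the argument is short: take any $x \in O(P_i,d)$. Then $\mathrm{dist}(x,P_i) = d$ and $\mathrm{dist}(x,P_j) \ge d$ for every $j$ (since the global minimum distance at $x$ equals $d$ and is attained at $P_i$). Hence $\mathrm{dist}(x,P_i) = d \le \mathrm{dist}(x,P_j)$ for all $j$, which is exactly the condition defining $V_i$; therefore $x \in V_i$. Since $x$ was arbitrary, $O(P_i,d) \subseteq V_i$. I would state this as a one-line chain of inequalities in the main proof.

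The one genuine subtlety — and the step I expect to be the main obstacle — is justifying that the distances appearing here are the geodesic distances on the surface (equivalently, in the parameter space with the induced metric), so that ``Voronoi cell'' means the geodesic Voronoi cell rather than the Euclidean one. This is where the earlier triangle-inequality theorem does real work: it guarantees the induced metric $d_g$ is a bona fide metric, so the point-to-set distance $\mathrm{dist}(\cdot,P_k)$ is well-defined, continuous, and the geodesic Voronoi decomposition is a genuine partition (up to the measure-zero bisector set). I would remark that the argument is purely metric and hence goes through verbatim for any metric space, in particular for the surface-constrained setting of this paper. A secondary point worth a sentence is the treatment of bisector points, where $\mathrm{dist}(x,P_i) = \mathrm{dist}(x,P_j) = d$: such $x$ lie on $\partial V_i \cap \partial V_j$ and belong to both cells under the closed-cell convention, so the inclusion $O(P_i,d)\subseteq V_i$ still holds without any tie-breaking; I would note that these are exactly the points where the offset curves of adjacent primitives meet, which is why no separate self-intersection handling is needed.
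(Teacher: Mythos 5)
Your proposal is correct and is essentially the paper's own argument: the paper proves the contrapositive (a point of $O(P_i,d)$ lying in some other cell $V_j$ would satisfy $d(P_j,x) < d(P_i,x) = d$, contradicting its membership in the global offset), while you phrase the same inequality chain directly; both hinge on the identical characterization of the trimmed offset as the set where the minimum distance over all primitives equals $d$ and is attained at $P_i$. Your added remarks on the induced metric and the closed-cell treatment of bisector points are sound supplements but not a different route.
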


\begin{proof} Consider two primitives $P_i$ and $P_j$ with Voronoi cells $V_i$ and $V_j$ respectively. Suppose, for contradiction, that there exists a point $x \in V_j$ that belongs to the offset curve of primitive $P_i$. By definition of the offset, $d(P_i,x) = d$ where $d$ is the offset distance. However, since $x$ lies in $V_j$, we have $d(P_j,x) < d(P_i,x) = d$, which contradicts the definition of the offset.

As illustrated in Figure~\ref{FIG:voronoi_offset}, the red point on the offset result within the Voronoi cell of the triangular primitive cannot be generated by offsetting either the circle or the square primitive. \end{proof}

\begin{figure}[h]
	\centering
\begin{overpic}
[width=0.5\linewidth]{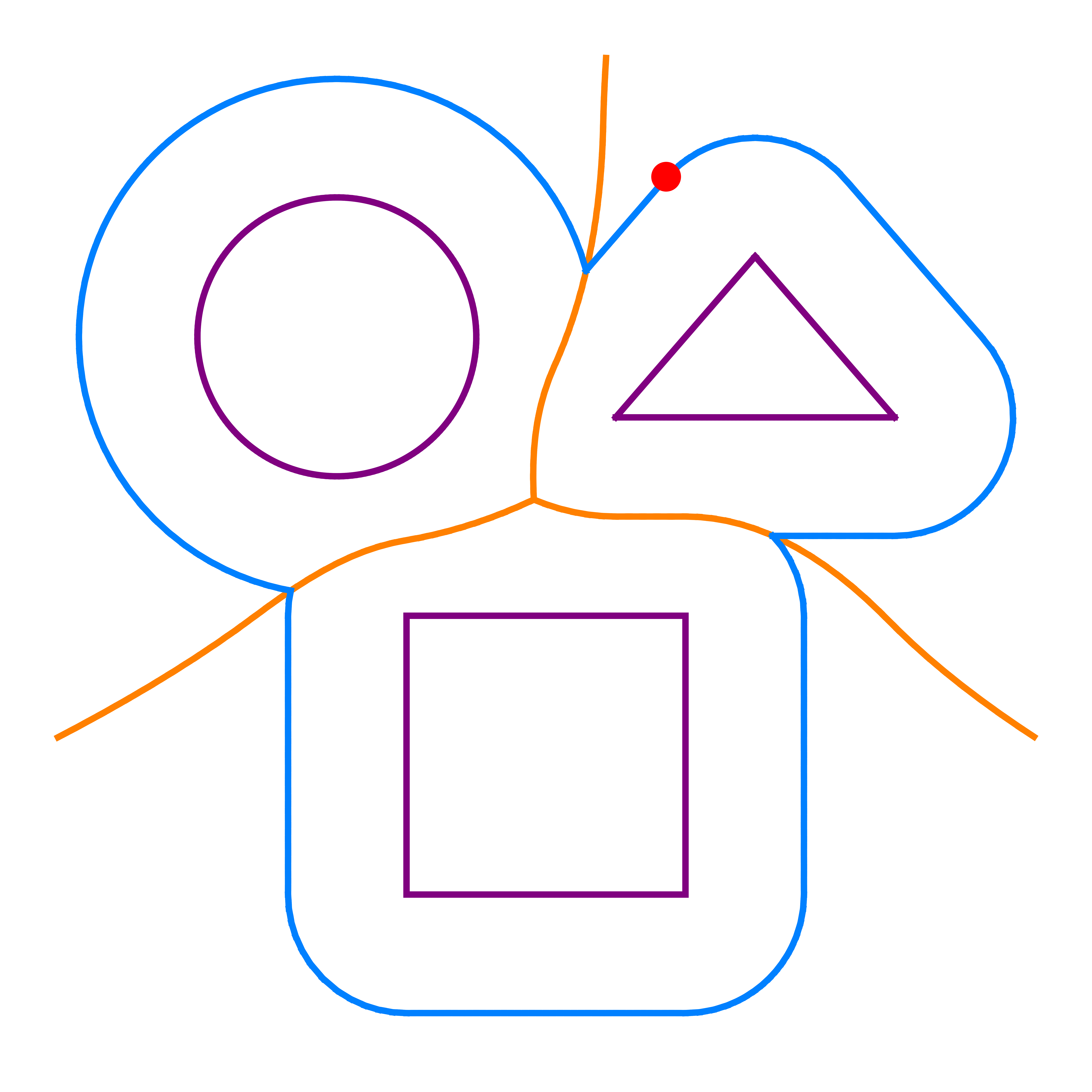}
\end{overpic}
\vspace{2mm}
\caption{
Three primitives (circle, triangle, and square, colored in purple) with their Voronoi diagram (colored in orange) and offset results (colored in blue). The offset region formed by each primitive is confined within its respective Voronoi cell.
}
\label{FIG:voronoi_offset}
\end{figure}

This theoretical foundation suggests an efficient approach for computing offsets: initially construct the Voronoi diagram of the primitives, then extract the offset structure for each primitive independently within its corresponding Voronoi cell. This divide-and-conquer strategy can substantially reduce the computational complexity of offset operations in specific scenarios. In our context, we can partition the source curve into distinct segments and compute their Voronoi diagram on the parametric space. The offset curves can subsequently be extracted independently within each Voronoi cell and amalgamated to form the complete offset structure.

\section{Implementation}
The key concept of our method is to discretize the source curve into segments, represent each segment using a point, and then extract the offset curve of each segment within its corresponding Voronoi cell.
Our algorithm is structured into three distinct phases:
\begin{enumerate}
\item Constructing Intrinsic Triangulation: Discretize the source curve into segments and represent them as points, then construct an intrinsic triangulation of the parameter space that preserves the surface's metric properties.
\item Voronoi Diagram Computation: Compute the Voronoi diagram of the representative points in the parameter space under the induced metric.
\item Offset Curve Extraction: Extract the offset curve from each Voronoi cell. Merge these individual components to obtain the final result.
\end{enumerate}

In the following sections, we present each phase in detail.

\subsection{Constructing Intrinsic Triangulation}
Our approach begins with preprocessing both the input curve and the underlying parameter space. To handle curves of arbitrary complexity, we first partition the input curve into small segments and represent each segment with a representative point. This discretization strategy preserves the curve's geometric structure while enabling efficient geodesic distance computation. With this discretized representation in hand, we proceed to construct a triangulation in the parameter space equipped with the surface-induced metric.

Consider a parametric surface defined over the domain $u \in [u_0, u_1], v \in [v_0, v_1]$. Our method commences by establishing a uniform triangulation of the parameter domain. This triangulation constitutes the fundamental infrastructure for geodesic computations, where the distance between any two points $p$ and $q$ in the parameter space is characterized as the geodesic distance between their corresponding surface points. For the initial triangulation where edges exhibit relatively minimal length, we approximate this geodesic distance between $p=(u_1,v_1)$ and $q=(u_2,v_2)$ utilizing the arc length along the surface:
\begin{equation}
L = \int_0^1 \sqrt{E(u_2-u_1)^2 + 2F(u_2-u_1)(v_2-v_1) + G(v_2-v_1)^2}  dt
\end{equation}
where $E$, $F$, and $G$ represent the coefficients of the first fundamental form. This approximation demonstrates high accuracy for infinitesimal distances and provides the initial metric structure essential for our triangulation methodology.

Furthermore, to enable geodesic distance computation between any vertex and the source curve, we incorporate the representative points of the curve segments into the parameter space triangulation through incremental insertion and triangle subdivision.

\subsection{Voronoi Diagram Computation in Parameter Space}

Voronoi diagrams have found extensive applications in computational geometry, robotics, and manufacturing due to their ability to partition space based on proximity relationships~\cite{10845125,Li2023,Meng2023,wang2024efficientnearestneighborsearch}. While numerous algorithms exist for computing Voronoi diagrams in various contexts, we adapt the SurfaceVoronoi algorithm proposed by Xin et al.~\cite{xin2022surfacevoronoi} to compute the Voronoi diagram in the parametric space, which provides a flexible framework for computing Voronoi diagrams on surfaces with arbitrary distance metrics. This approach effectively decouples the distance computation method from the Voronoi diagram construction, making it particularly suitable for our parameter space with its induced metric structure.

The algorithm consists of two main stages: the distance field propagation stage and the incremental half-plane cutting stage. In the propagation stage, we modify the original approach by replacing the distance computation with our geodesic distance calculation method described in Section~\ref{sec:Geodesic Computation via Parameter Space Triangulation}. This ensures that the propagation of distance fields throughout the parameter space accurately reflects the intrinsic geometry of the surface. Additionally, we constrain the propagation region, preventing it from extending to areas that are significantly more distant from the source curve compared to the specified offset distance.

An example of a parametric surface and its Voronoi diagram for discrete curve points in parameter space is illustrated in Figure~\ref{FIG:voronoi_disance_offset}. 
Comprehensive algorithmic details regarding the construction and computational implementation of this diagram are presented in ~\cite{xin2022surfacevoronoi}.

\begin{figure}[h]
	\centering
    \vspace{3mm}
\begin{overpic}
[width=1\linewidth]{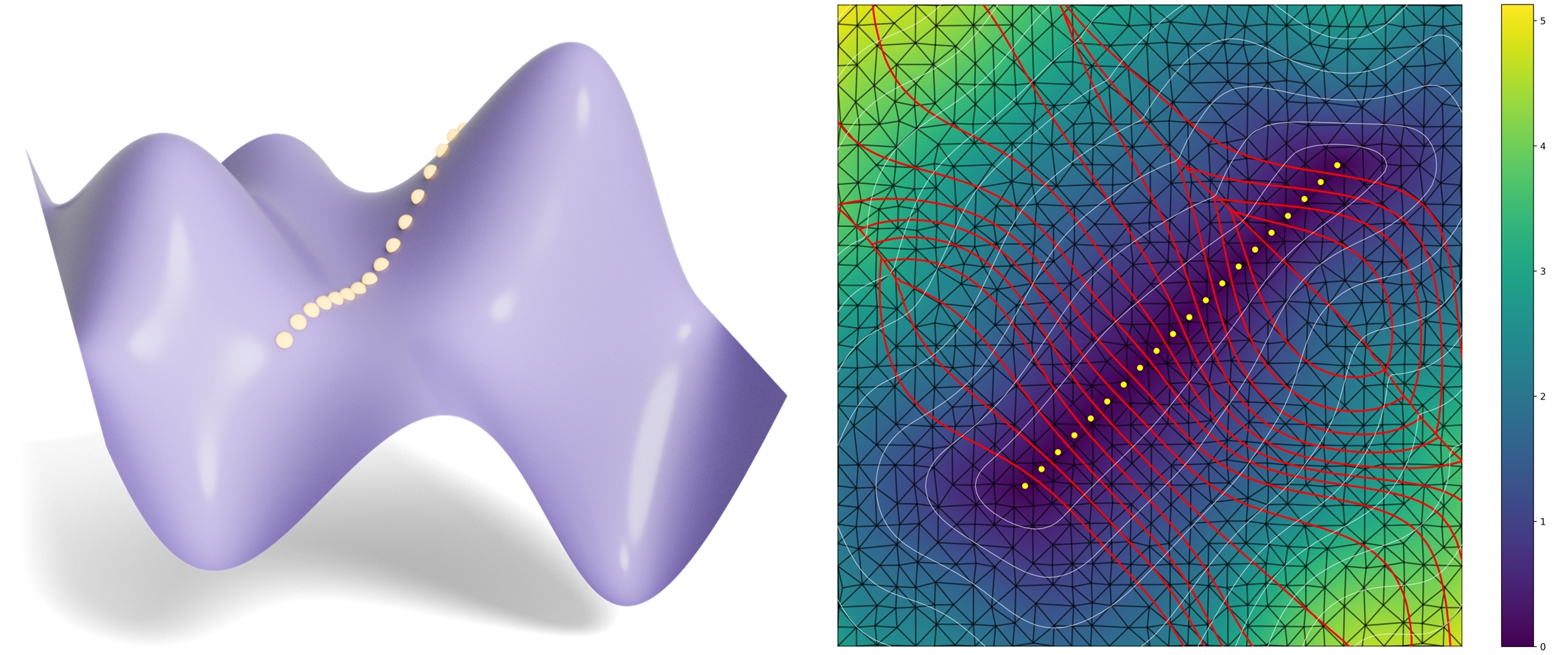}
\end{overpic}
\vspace{2mm}
\caption{
Voronoi diagram (red) of discretized curve points in parameter space under the parametric surface-induced geodesic metric. Note the substantial differences from conventional Euclidean-based 2D Voronoi diagrams. Background coloration indicates the geodesic distance field from the curve, black lines represent the parameter space triangulation, and white contours show geodesic isolines (equivalent to offset curves at varying distances).
}
\label{FIG:voronoi_disance_offset}
\end{figure}

\subsection{Offset Curve Extraction}

To enable per-cell extraction of offset curves, we integrate the Voronoi diagram with the original triangulation, creating a refined mesh structure. We modify the standard surface Voronoi algorithm to directly output an integrated triangular mesh rather than Voronoi edges. 
This modification avoids a separate re-triangulation step while ensuring the final mesh incorporates both Voronoi boundaries and original mesh edges. The resulting mesh maintains the critical property that each triangular face belongs to exactly one Voronoi cell, ensuring all points within a face are geodesically closest to that cell's site.

\begin{figure}[h]
	\centering
\begin{overpic}
[width=1\linewidth]{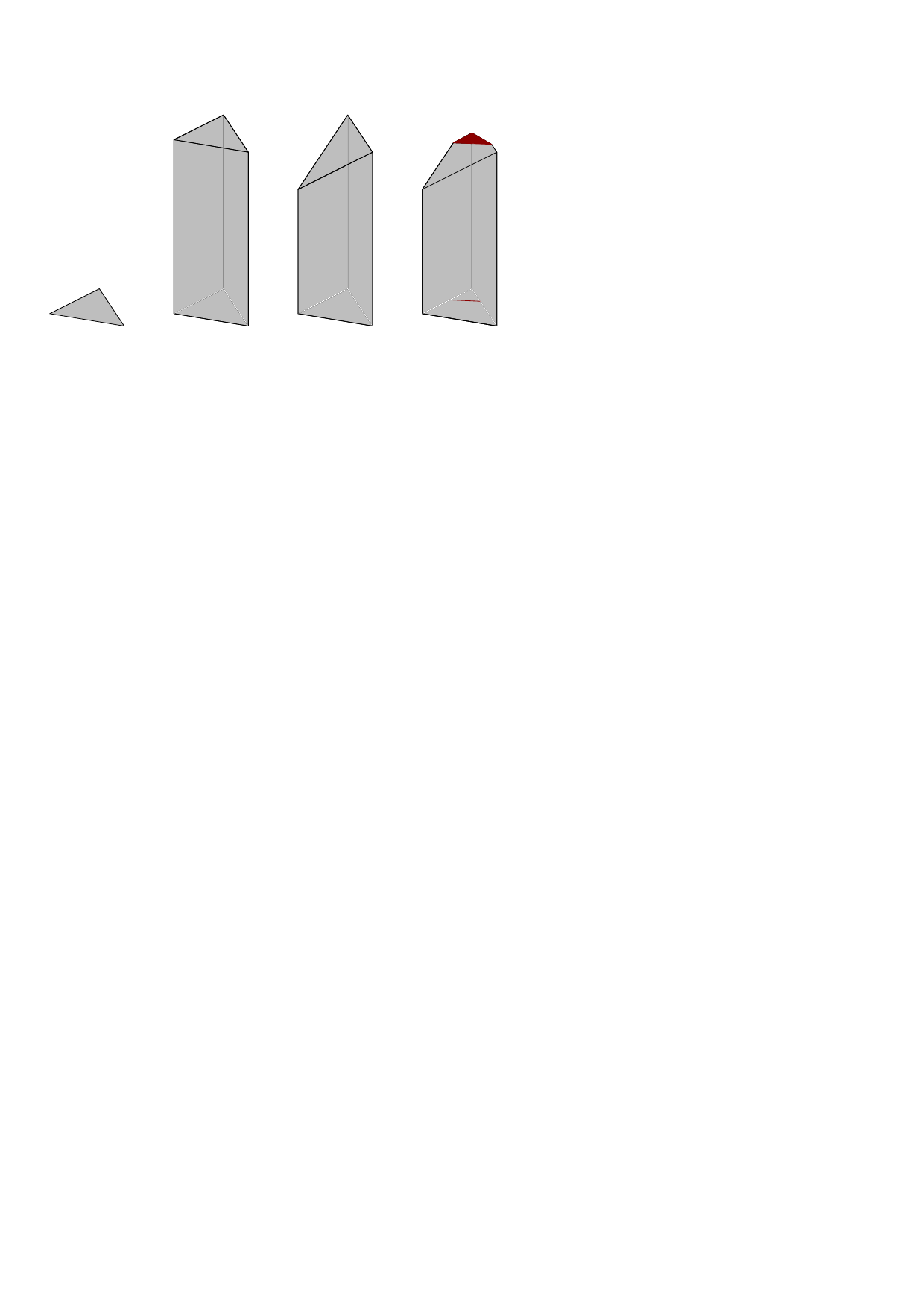}
\put(6,-5){(a)}
\put(33.5,-5){(b)}
\put(61,-5){(c)}
\put(88.5,-5){(d)}
\end{overpic}
\vspace{2mm}
\caption{
Extracting the offset curve from a triangle.
(a) Unfold the triangle onto a plane along its edges, forming an infinitely high prism with this triangle as the base.
(b) Cutting the prism with the plane determined by the distance field.
(c) Incrementally cutting the prism with planes at a constant height $d$, where $d$ equals to the offset distance. Project the intersection segments onto the triangle to obtain the offset curve.}
\label{FIG:Halfplane_Cutting}
\end{figure}

\begin{figure*}[h]
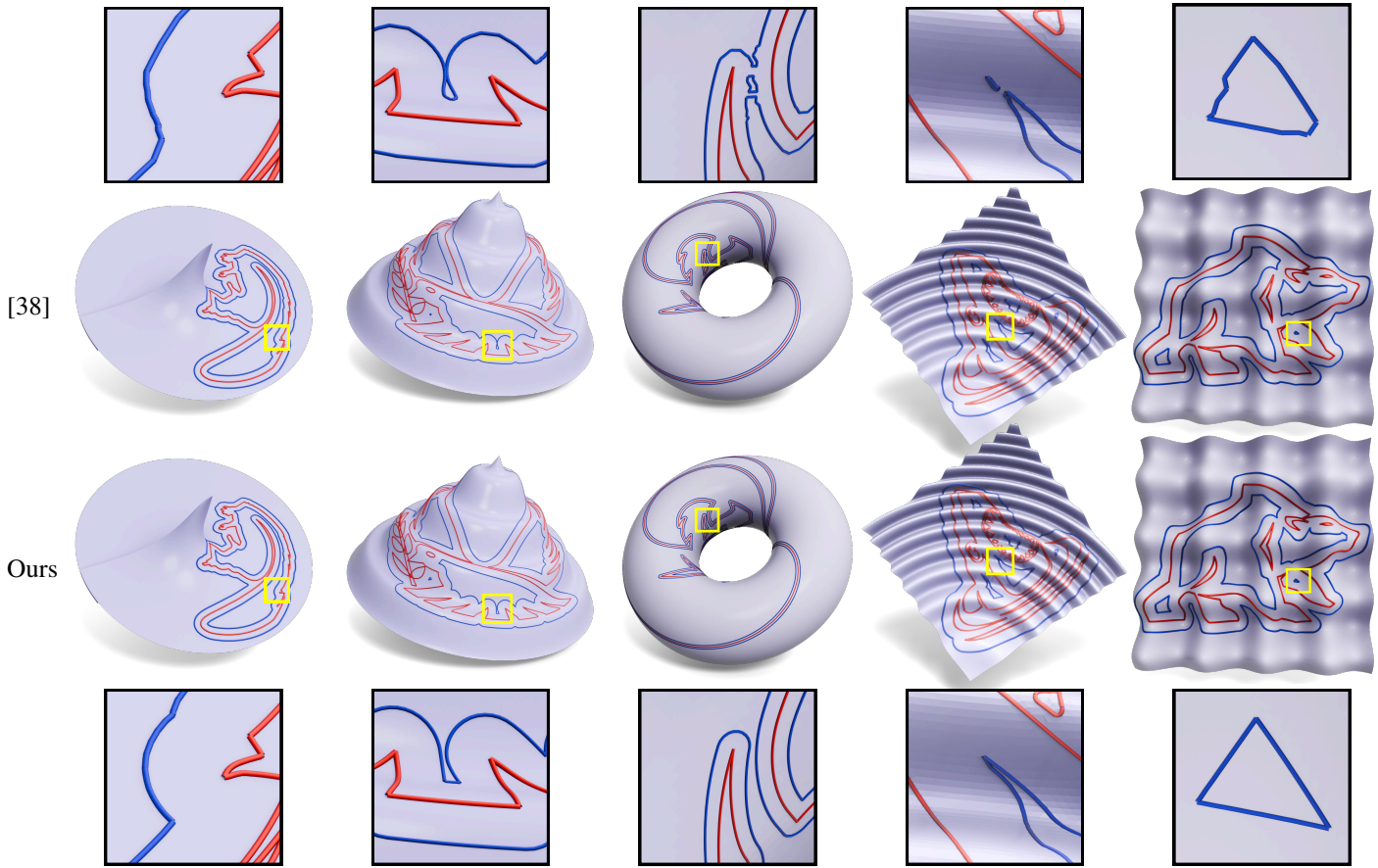

	\centering
\begin{overpic}
[width=0.95\linewidth]{imgs/comparewithxin1.pdf}
\put(-4,22){Ours}
\put(-4,42){\cite{XIN20111468}}
\end{overpic}
\vspace{2mm}
\caption{
Visual comparison of offset curve computation between our method (bottom) and ~\cite{XIN20111468} (top) on various parametric surfaces. 
Our method demonstrates more accurate alignment with theoretical offsets and better preservation of geometric features. The original source curves are shown in red, with offset curves in blue.
}
\label{FIG:compare}
\end{figure*}

With the integrated mesh structure in place, we proceed to construct a discrete distance field by computing geodesic distances at each vertex. For vertices from the original triangulation, we compute the geodesic distance from each vertex to the site of the Voronoi cell it belongs to.
For Voronoi vertices, which are newly introduced points at the intersection of multiple Voronoi cells, we compute geodesic distances to each adjacent Voronoi cell's site on the source curve.While these vertices should theoretically be equidistant from adjacent sites, our approximate Voronoi diagram computation may yield slightly different values. For consistency, we assign each Voronoi vertex the minimum geodesic distance from among its adjacent cells' sites, ensuring every vertex in our modified mesh has a well-defined geodesic distance value.

Since each triangle in our refined mesh belongs to exactly one Voronoi cell and has well-defined distance values at its vertices, we can extract offset curves independently for each triangle.
By linearly approximating the distance field within each triangle, we transform a complex geometric problem into a simple plane-cutting operation. This localization enables parallel processing and eliminates the need for global self-intersection handling. With the distance field established at every vertex, the extraction procedure for each triangle is straightforward:

\begin{enumerate} 
\item We isometrically develop each triangle onto a plane and construct a vertical prism that extends infinitely in both directions.
\item We create a sloped plane where the height at each vertex equals its geodesic distance to the source curve. This plane, representing the linearly interpolated distance field, intersects and cuts the prism, creating a half-space envelope.
\item By further cutting this remaining portion with a horizontal plane at height equal to the specified offset distance, we obtain the 3D intersection curve which, when projected vertically back onto the 2D triangle domain, forms the local offset curve segment.
\end{enumerate}

Figure~\ref{FIG:Halfplane_Cutting} illustrates this extraction process. By combining these independently extracted segments, we form the complete offset curve that naturally respects the topology of the distance field.

\section{Evaluation}
\subsection{Experimental setting}
We implemented our algorithm in C++ on a platform with a 2.8 GHz AMD 5050X processor running Windows 10. We modified the code from~\cite{10.1145/3414685.3417839} to meet our requirements for geodesic computation on parametric surfaces.

\begin{figure*}[h]
	\centering
\begin{overpic}
[width=0.98\linewidth]{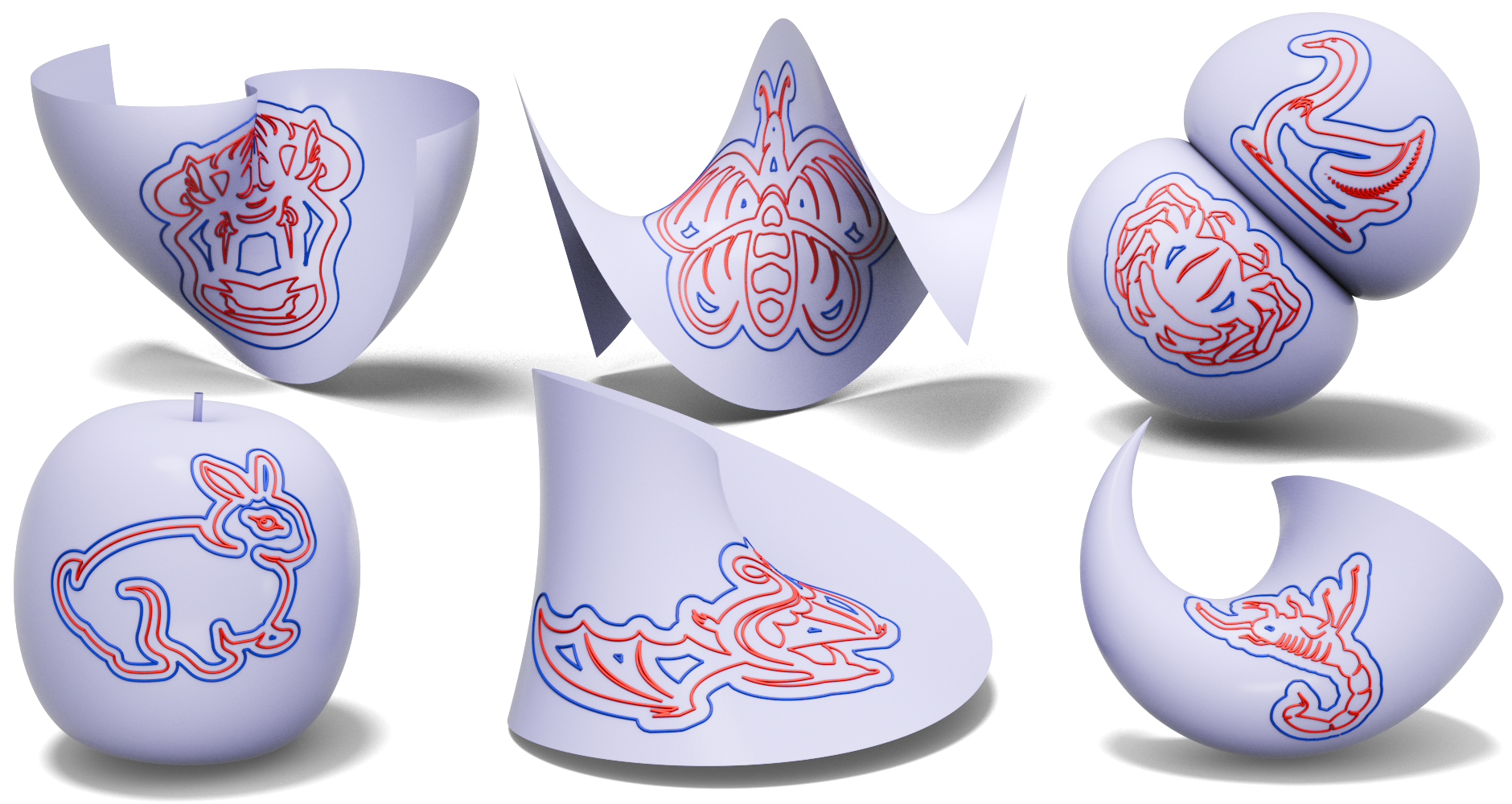}
\end{overpic}
\vspace{2mm}
\caption{
Gallery of offset curves computed using our method on various parametric surfaces. 
For each model, the source curve is shown in red, with offset curves in blue. 
}
\label{FIG:gallery1}
\end{figure*}

\begin{figure}[h]
	\centering
\begin{overpic}
[width=1\linewidth]{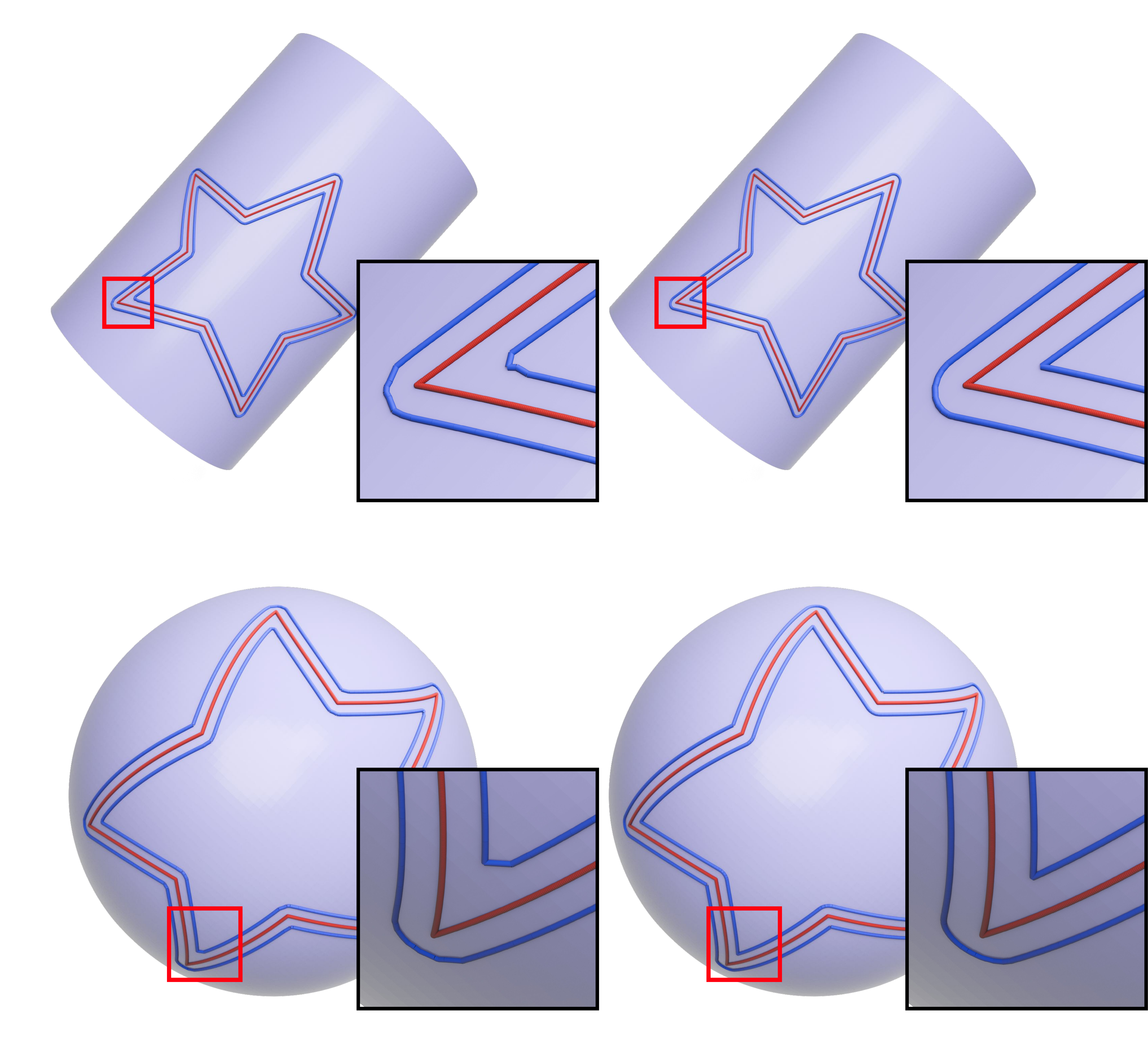}
\put(20,-5){\cite{XIN20111468}}
\put(69,-5){Ours}
\end{overpic}
\vspace{2mm}
\caption{
Visual comparison of offset curves on cylindrical (top) and spherical (bottom) surfaces. Left: results by ~\cite{XIN20111468}; right: our results. Original curves are shown in red, with offset curves in blue.
}
\label{FIG:hausdoff}
\end{figure}

\subsection{Comparison with the State of the Art}
In this paper, we compare our method with Xin et al.\cite{XIN20111468}, which is designed for computing offset curves on triangulated mesh surfaces. We do not compare with analytical methods for parametric surfaces, as these are typically applicable only to specific surface types and often lack publicly available implementations.
Our method requires intrinsic triangulation of the parametric space, while Xin et al.\cite{XIN20111468} requires extrinsic triangulation of the parametric surface.
For a fair comparison, we ensure that the number of triangulations in both methods remains consistent.

\paragraph{Visual comparison}
We compare our method with Xin et al.~\cite{XIN20111468} on five complex models. The experimental results are shown in Figure~\ref{FIG:compare}. When the offset distance exceeds the curvature radius of the input curve, distinct feature angles, or cusps, appear in the offset curve. The ability to generate cusps is a key indicator of the quality of an offset result. As seen in the figure, our method clearly produces cusps, whereas Xin et al.'s method does not. This is because our approach partitions the curve into different regions, computes the offsets, and extracts them individually. Each Voronoi cell captures the local characteristics of the original curve, leading to a more accurate approximation of the distance field. In contrast, methods that extract offsets only within triangles lack this region-specific information and thus cannot achieve the same level of accuracy. Figure~\ref{FIG:gallery1} displays additional results obtained using our method.

\paragraph{Quantitative comparison}
Traditional surfaces cannot support precise geodesic distance calculations. To quantitatively compare our method with \cite{XIN20111468}, we selected spherical and cylindrical surfaces for testing. Using identical input curves and offset distances, we compared our method with \cite{XIN20111468}. The results are shown in Figure\ref{FIG:hausdoff}. We sampled 100K points from both the parameter curve and the computed offset curve, and calculated the one-directional Hausdorff distance and Chamfer distance. As shown in Table~\ref{tab:quantitative}, our method demonstrates superior accuracy across these metrics.

\begin{table}[]
\centering
\renewcommand{\arraystretch}{1.2} 
\setlength{\tabcolsep}{5pt}  
\resizebox{\linewidth}{!}{
\begin{tabular}{c|cc|cc}
\toprule
\multirow{2}{*}{\textbf{Surface}} & \multicolumn{2}{c|}{\textbf{\cite{XIN20111468}}} & \multicolumn{2}{c}{\textbf{Ours}} \\ 
\cmidrule(lr){2-5}
 & HD ($\times 10^{-3}$) & CD ($\times 10^{-3}$) & HD ($\times 10^{-3}$) & CD ($\times 10^{-3}$) \\ 
\midrule
Sphere      & 7.59 & 2.11 & 0.71 & 0.29 \\ 
Cylinder    & 8.33 & 0.15 & 0.76 & 0.02 \\
\bottomrule
\end{tabular}
}
\caption{Quantitative comparison of offset curve accuracy using one-directional Hausdorff Distance (HD) and Chamfer Distance (CD).}
\label{tab:quantitative}
\end{table}

\paragraph{Performance comparison}
Table~\ref{tab:time_comparison} presents the computational times for our method and \cite{XIN20111468} on the models shown in Figure~\ref{FIG:compare}. Our method demonstrates a clear advantage in terms of computational efficiency. This performance gain is primarily due to our approach for computing geodesic distances, which directly utilizes intrinsic edge-flipping operations. Compared to traditional exact geodesic computations, our method achieves significantly faster processing times.

\begin{table}[h]
    \centering
    \renewcommand{\arraystretch}{1.2} 
    \setlength{\tabcolsep}{11.5pt}  
    \resizebox{\linewidth}{!}{
        \begin{tabular}{c|c|c|c|c} 
        \toprule
        \multirow{2}{*}{\textbf{Surface}} & \multirow{2}{*}{\textbf{\#V}} & \multirow{2}{*}{\textbf{\#F}} & \multicolumn{2}{c}{\textbf{Time (s)}} \\
        \cmidrule(lr){4-5}
        & & & \textbf{Ours} & \textbf{\cite{XIN20111468}} \\
        \midrule
        Bell Curve & 25k & 50k & 1.173 & 1.926 \\
        Spiral Paraboloid & 26k & 51k & 0.981 & 2.184 \\
        Torus & 50k & 99k & 1.744 & 4.623 \\
        Circular Wave & 34k & 66k & 1.291 & 2.782 \\
        Bivariate Sine Wave & 40k & 80k & 2.116 & 4.741 \\
        \bottomrule
    \end{tabular}
    }
    \caption{Runtime performance comparison between our method and \cite{XIN20111468} on models shown in Figure~\ref{FIG:compare}. \#V and \#F represent the number of vertices and faces in the triangulation, respectively. In this experiment, the curve was discretized into 2K segments.}
    \label{tab:time_comparison}
\end{table}

\subsection{Time Complexity Analysis}
To evaluate the algorithm's computational efficiency, we conducted two experiments examining how discrete segment count and offset distance affect processing time.

\paragraph{Number of Discrete Segments}
We discretized the input curve into varying numbers of segments while maintaining a constant offset distance of 0.3 and 50,000 intrinsic triangles. Figure~\ref{FIG:time-points} illustrates the results. The computation time increases linearly with the number of sampled points, demonstrating the algorithm's scalability even as input complexity grows.

\begin{figure}[h]
	\centering
    \vspace{3mm}
\begin{overpic}
[width=1\linewidth]{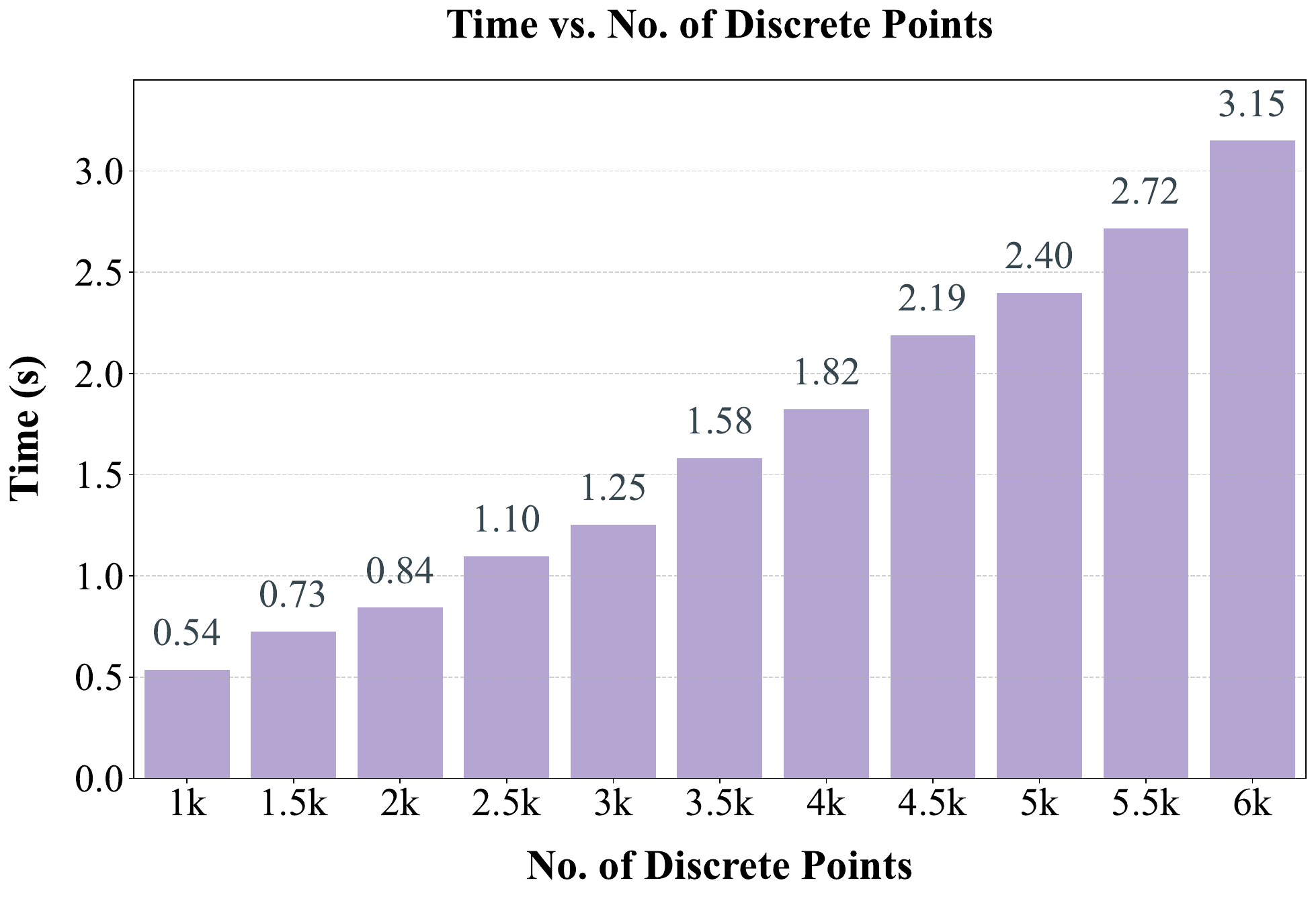}
\end{overpic}
\caption{
Computation time statistics with respect to discrete points. The results clearly show that computation time scales linearly with the number of discrete points.
}
\label{FIG:time-points}
\end{figure}

\begin{figure}[h]
	\centering
    \vspace{3mm}
\begin{overpic}
[width=1\linewidth]{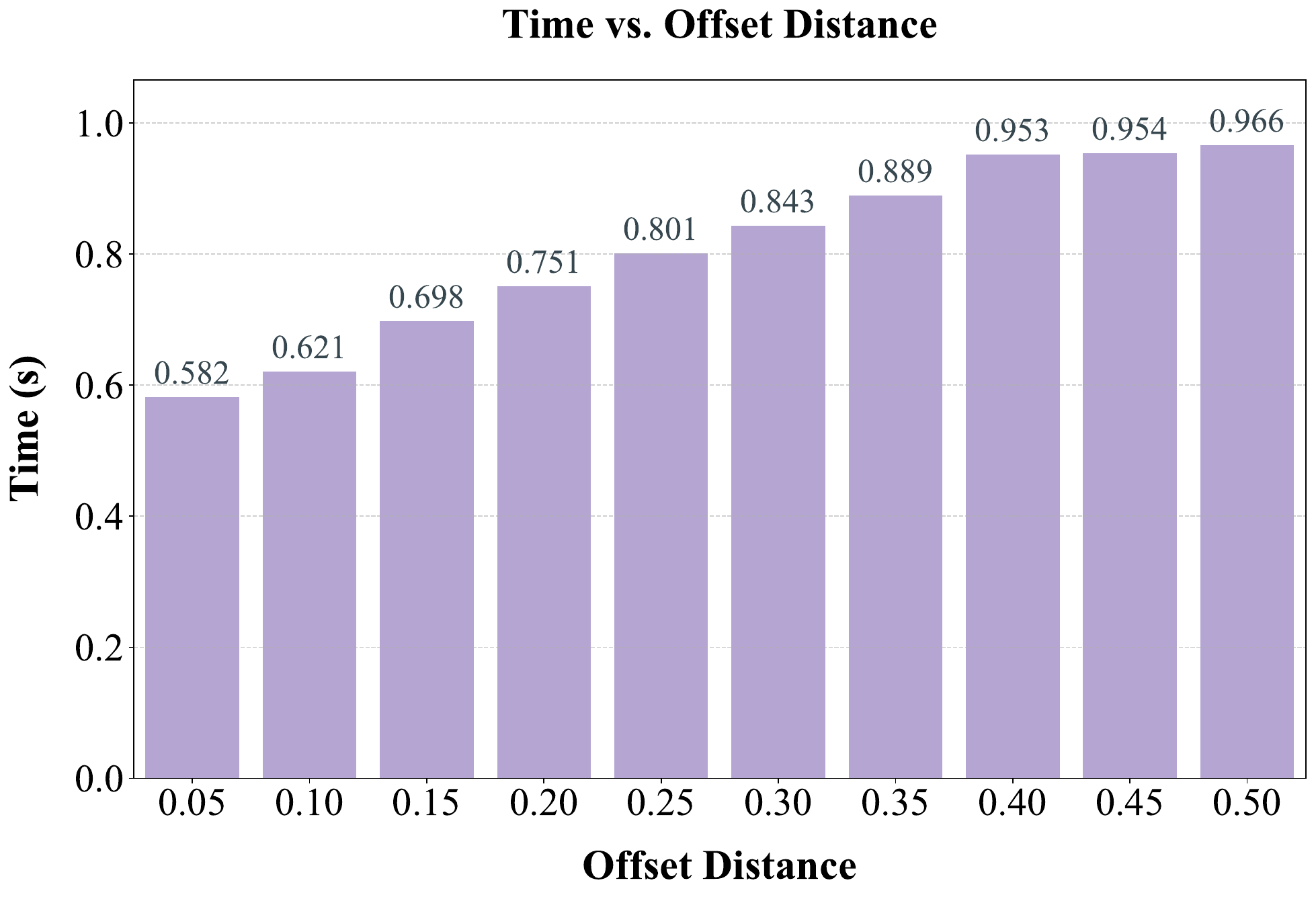}
\end{overpic}
\caption{
Time statistics with respect to offset distance.
The computation time increases with larger offset distances, owing to their impact on the propagation region of the distance field.
}
\label{FIG:time-offset}
\end{figure}

\paragraph{Offset Distance}
Our second experiment investigated the relationship between computational time and offset distance using 50,000 intrinsic triangles and 2,000 segments. Figure~\ref{FIG:time-offset} displays these results. Processing time increases gradually with larger offset distances because the distance field's propagation stage must cover more extensive surface regions. Nevertheless, the algorithm maintains efficient performance across varying offset distances, exhibiting robust scalability with increasing task complexity.

\subsection{Ablation Study}
The accuracy of our results is primarily determined by the number of discrete segments, with minimal dependency on the number of intrinsic triangles. This is because after computing the Voronoi diagram, we embed it into the final triangulation, effectively refining the input triangular mesh. Consequently, the initial intrinsic triangulation has little impact on the final result. Therefore, we focused our analysis solely on how the number of segments affects our results. Additionally, we compared the accuracy of geodesic distance computed using our intrinsic triangle flipping method against those obtained by discretizing the parametric surface into a triangular mesh on a spherical surface.

\subsubsection{Variable Discretization Precision}
Figure~\ref{FIG:ResultsunderDifferentPoints} demonstrates our results under different discretization precisions of the input curve. Visual comparison reveals that at lower discretization densities, the offset results clearly exhibit a segmented appearance composed of multiple distinct arc segments. This occurs because fewer discrete points are insufficient to accurately represent the source curve. As sampling density increases, results become increasingly accurate. When our sampling number reaches approximately 200, we achieve visually satisfactory results.

\begin{figure}[h]
	\centering
\begin{overpic}
[width=0.98\linewidth]{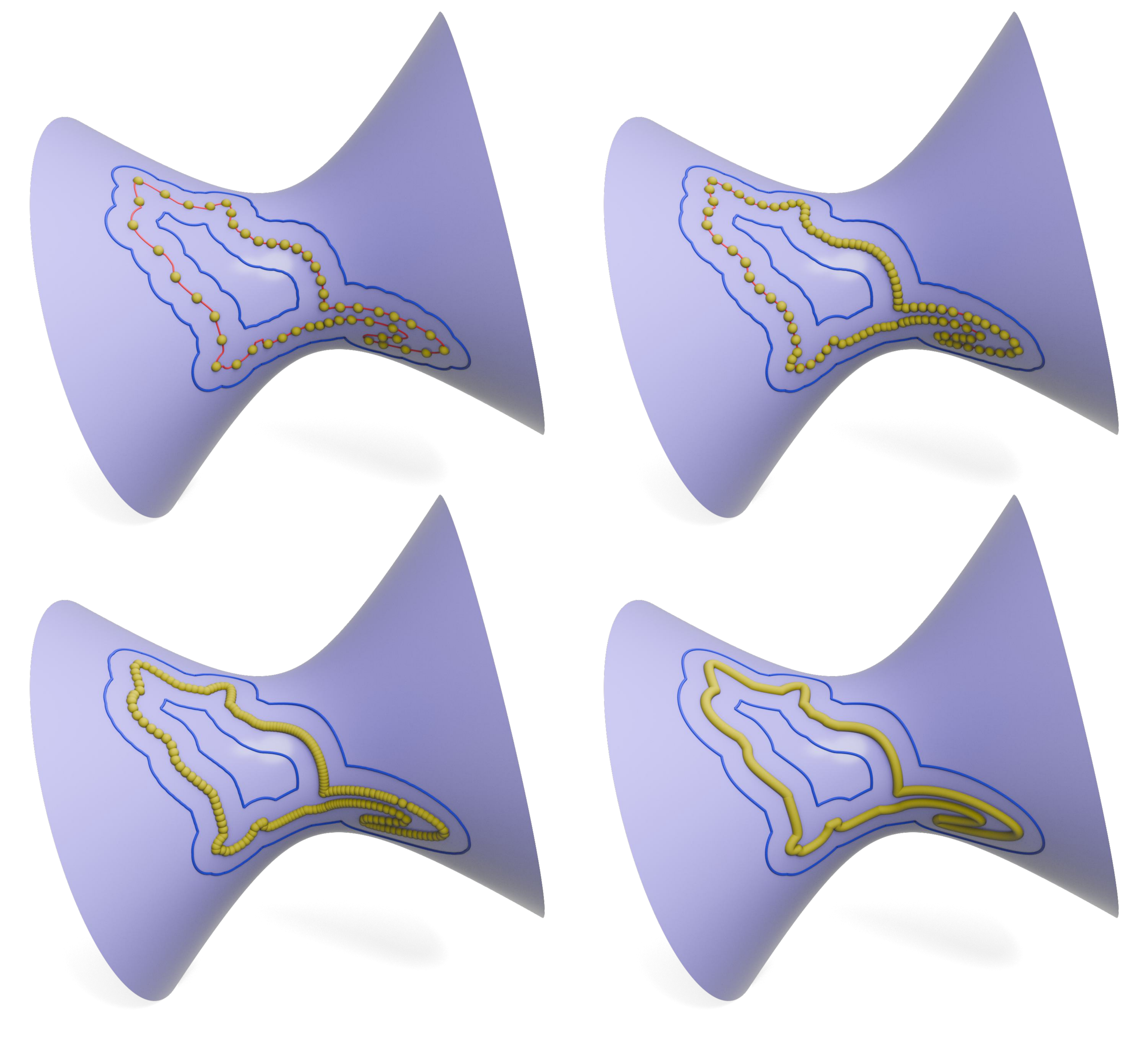}
\end{overpic}
\vspace{2mm}
\caption{
Offset results under different curve discretization densities. From left to right, top to bottom: results obtained with 50, 100, 200, and 2000 sampling points. Lower discretization densities produce visible segmentation artifacts. The results become visually satisfactory when the curve discretization density exceeds 200. The source curve is shown in red, with the offset curve in blue.
}
\label{FIG:ResultsunderDifferentPoints}
\end{figure}

\begin{table}[h]
\centering
\renewcommand{\arraystretch}{1.2} 
\setlength{\tabcolsep}{5pt}  
\resizebox{\linewidth}{!}{
\begin{tabular}{c|ccccc} 
\toprule
\textbf{\#F}   & \textbf{80}     & \textbf{320}    & \textbf{1280}   & \textbf{5120}   & \textbf{20480}   \\ 
\midrule
Ours   & 1.25\% & 0.43\% & 0.12\% & 0.03\% & 0.008\% \\
Remeshing & 2.67\% & 0.81\% & 0.23\% & 0.06\% & 0.02\% \\ 
\bottomrule
\end{tabular}
}
\caption{ Comparison of Geodesic Distance Accuracy: Deviation rates (\%) from analytical solutions for our intrinsic triangulation method versus the traditional remeshing approach across varying triangulation densities. Lower values denote a more accurate approximation of the exact geodesic distances.}
\label{table:geodesic}
\end{table}

\subsubsection{Variable Triangulation Density Analysis}

Figure~\ref{FIG:ResultsunderDifferenttriangles} presents the offset curves generated by our algorithm under different mesh resolutions. 
The computed offset curves show consistency across varying triangulation densities, with close-up windows highlighting areas of minor deviation. 
Our approach yields satisfactory results even at lower resolution settings, 
which can be attributed to our Voronoi cell-based extraction methodology. 
By embedding the Voronoi diagram into the triangular mesh and extracting offset curves on a per-cell basis, we effectively mitigate the influence of global mesh density. 
The observed geometric discrepancies are primarily attributable to computational inaccuracies in the Voronoi diagram construction induced by triangulation coarseness, notwithstanding the adequate sampling density of the source curve itself.

\begin{figure}[h]
	\centering
\begin{overpic}
[width=1\linewidth]{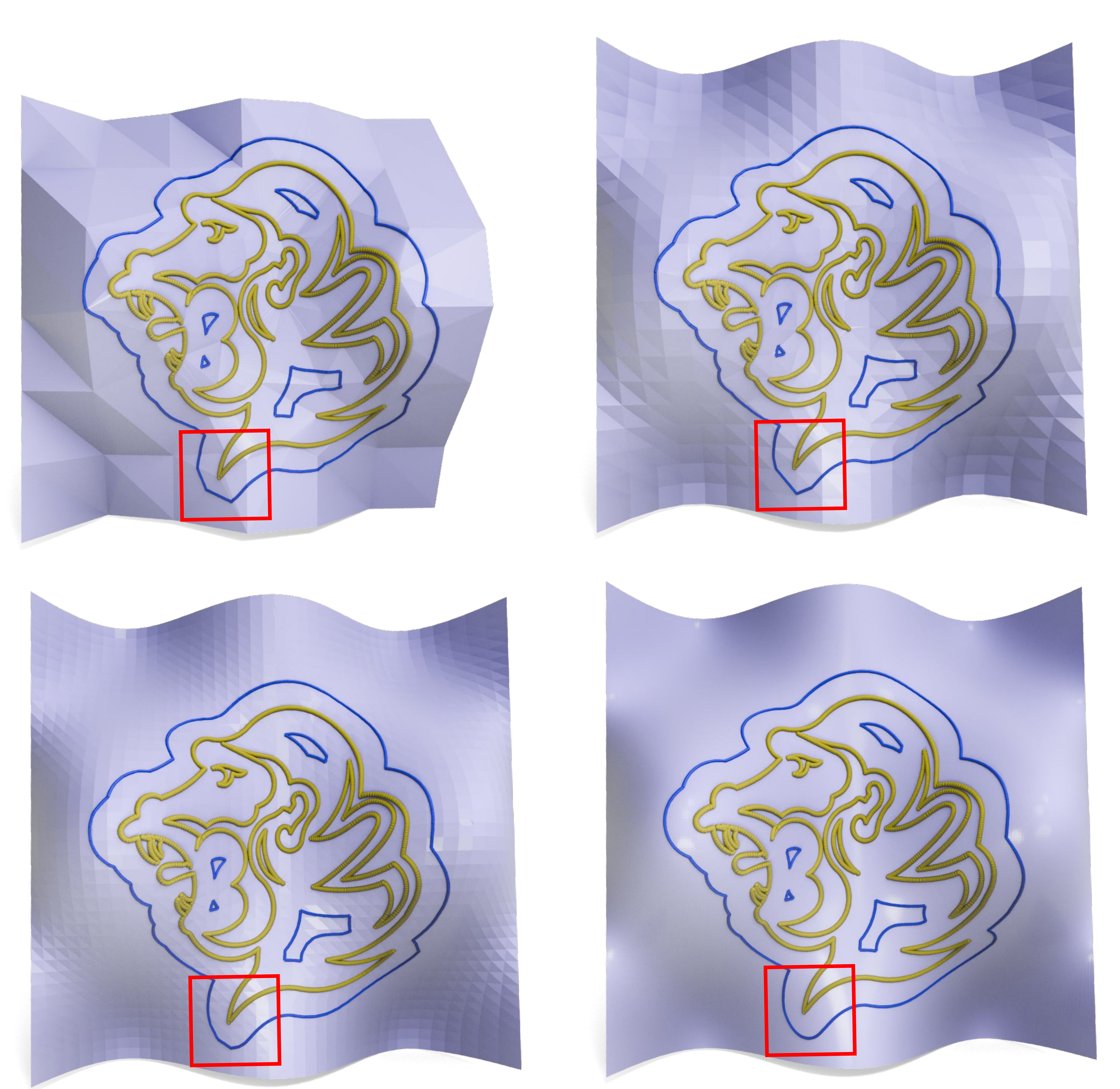}
\end{overpic}
\vspace{2mm}
\caption{
Offset curve results generated under different surface mesh resolutions. From left to right, top to bottom: results computed on meshes consisting of 72 triangles, 800 triangles, 3200 triangles, and 80000 triangles, respectively, illustrating the algorithm's consistency across multiple discretization levels.
}
\label{FIG:ResultsunderDifferenttriangles}
\end{figure}

\begin{figure}[h]
	\centering
    \vspace{3mm}
\begin{overpic}
[width=1\linewidth]{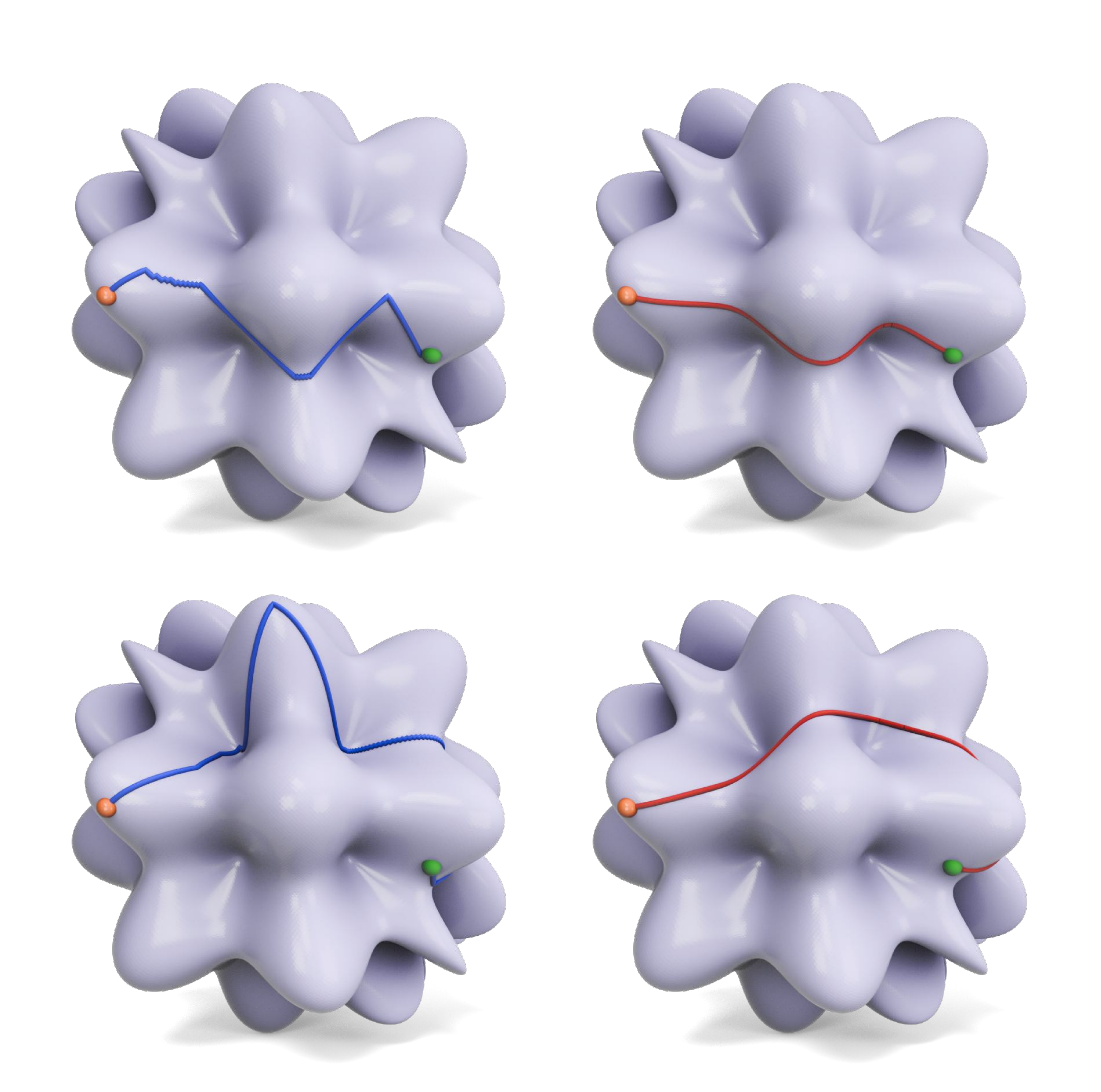}
\put(22,49){(a)}
\put(74,49){(b)}
\put(22,-2){(c)}
\put(74,-2){(d)}

\end{overpic}
\vspace{2mm}
\caption{
Geodesic paths computed under different initialization strategies (mapped from parameter space to the surface for improved visualization).
(a) Dijkstra algorithm initialization path; (b) Optimized geodesic path derived from Dijkstra initialization; (c) Suboptimal initialization path; (d) Optimized geodesic path derived from suboptimal initialization. Orange and green dots denote the start and end points, respectively.
}
\label{FIG:singlegeodesicfailed}
\end{figure}

\begin{figure}[h]
	\centering
    \vspace{3mm}
\begin{overpic}
[width=1\linewidth]{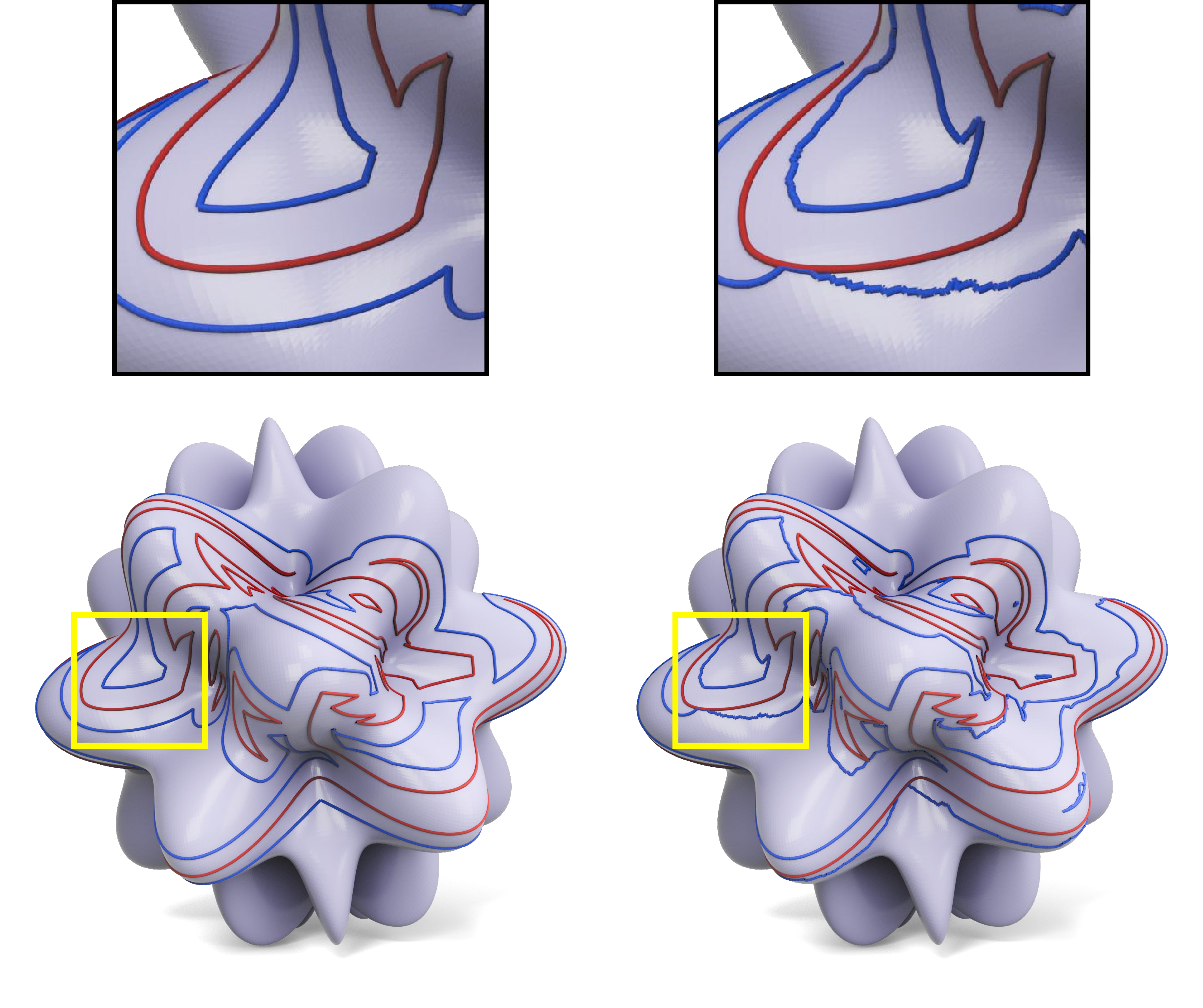}
\put(23,-5){(a)}
\put(74,-5){(b)}

\end{overpic}
\vspace{2mm}
\caption{
Comparative visualization of curve offset results based on different initialization methods. (a) Offset curves obtained using Dijkstra algorithm initialization; (b) Offset curves generated using suboptimal initialization strategy.
}
\label{FIG:allgeodesicfailed}
\end{figure}

\begin{figure}[h]
	\centering
    \vspace{3mm}
\begin{overpic}
[width=1\linewidth]{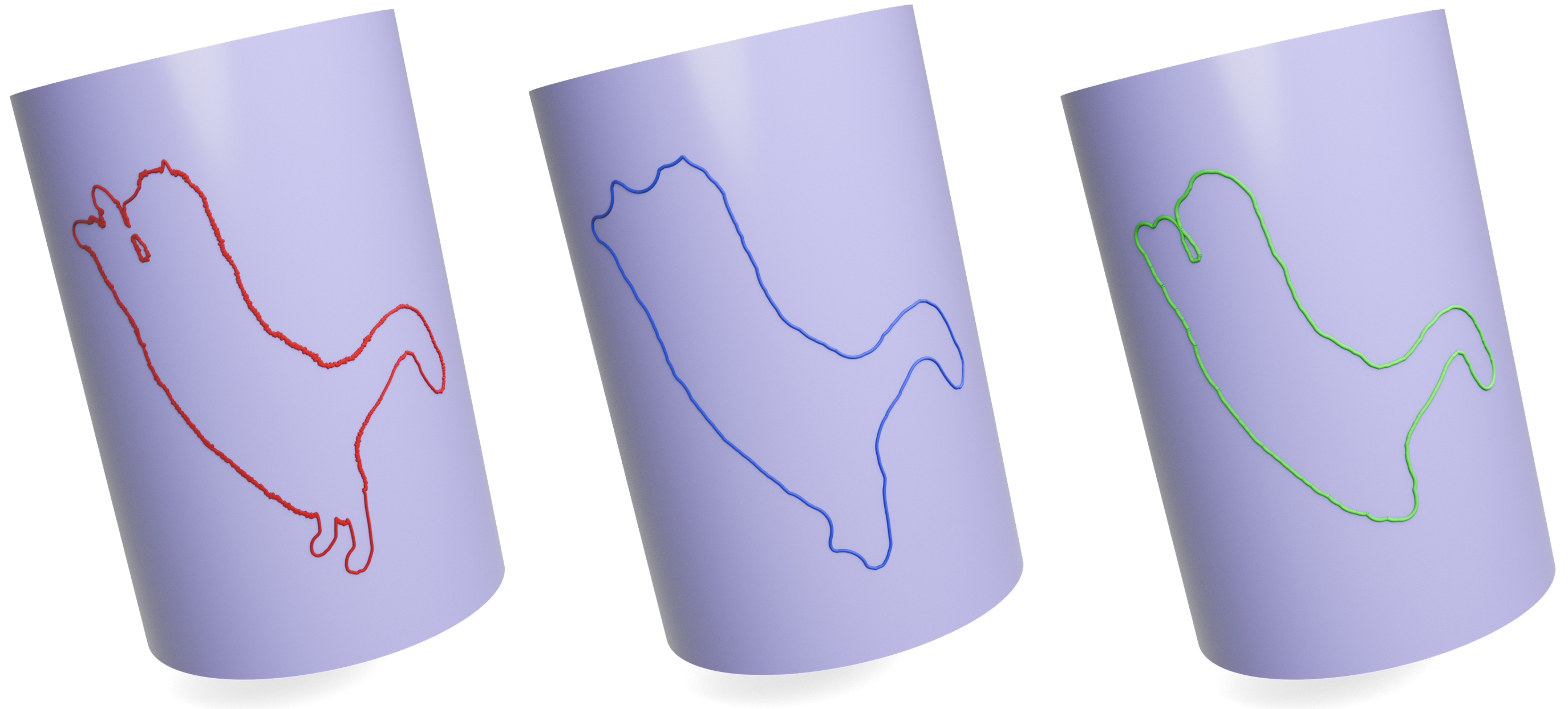}
\put(16,-4){Input}
\put(49,-4){Closing}
\put(83,-4){Opening}

\end{overpic}
\vspace{2mm}
\caption{
Morphological operations on a parametric surface. Left: original input curve (red); middle: result of the closing operation (dilation followed by erosion) shown in blue; right: result of the opening operation (erosion followed by dilation) shown in green. In the closing result, small holes are filled, while in the opening result, small structures are removed, leaving larger structures intact.
}
\label{FIG:openingAndClosing}
\end{figure}

\subsubsection{Intrinsic vs. Extrinsic Geodesic Computation}
Although our method computes geodesic distances by flipping intrinsic triangulations in the parameter domain, it is not exact since the initial intrinsic triangulation is not completely developable. To verify the accuracy of the geodesic distance, we compared it with the method of discretizing parametric surfaces into triangular meshes. For precise geodesic distance testing, we selected a sphere as the test surface due to the availability of analytical solutions between any two points on a sphere. We discretized the sphere at different resolutions, maintaining consistent face counts between both triangulation methods. We validated both methods against analytical solutions by randomly selecting start and end points, calculating geodesic distances 10,000 times, and computing the average deviation. Results in Table~\ref{table:geodesic} show our method achieves approximately twice the accuracy of the mesh-based approach.

\subsubsection{Failure Case}
Our method employs an intrinsic triangulation flipping strategy to compute geodesic distances, which, while efficient, cannot guarantee global optimality in all scenarios. The quality of the computed geodesic path depends significantly on the initialization path from which the flipping procedure begins. Throughout our experimental evaluation, we utilized Dijkstra's algorithm to generate initialization paths, and no errors were observed in any of the test cases presented in this paper.

To systematically investigate the limitations of our approach, we deliberately introduced a suboptimal initialization strategy: 
when constructing the initialization path between vertices $t$ and $s$, we inserted multiple randomly selected intermediate vertices, calculated separate Dijkstra paths between consecutive vertex pairs, then concatenated these paths to form a highly suboptimal initialization path that deliberately deviates from the expected geodesic trajectory.
Figure~\ref{FIG:singlegeodesicfailed} juxtaposes the geodesic paths obtained under both initialization strategies. The results clearly demonstrate that with poorly constructed initialization paths, the flipping procedure converges to suboptimal geodesic paths.

Furthermore, Figure~\ref{FIG:allgeodesicfailed} illustrates the impact of initialization strategy on the final offset curves. With suboptimal initialization, the computed offset curves exhibit inaccuracies.  As shown in the figure, certain points that are geometrically proximate on the surface can be calculated to have disproportionately large geodesic distances, causing the offset curve to appear much closer to the original curve than it should be in those instances.

A fundamental property of geodesic distance fields is that their gradient norm equals 1. Therefore, if we observe that the computed distance field gradient exceeds 1, it clearly indicates an error in the geodesic distance computation. Specifically, for a triangle $\triangle v_1v_2v_3$ with computed geodesic distances $d_1$, $d_2$, and $d_3$ respectively, if $|d_2 - d_1| > L(v_1,v_2)$, where $L(v_1,v_2)$ is the distance between the two points under the induced metric, then the directional gradient of the distance field along edge $v_1v_2$ exceeds 1, which is impossible and indicates computational errors in geodesic distances.
The same principle applies to the other two edges of the triangle. This strategy should detect most shortest path computation failures, as when geodesic distance computation errors occur, they are often accompanied by regions where geometrically adjacent points have vastly different Dijkstra initialization paths, leading to significantly different computed distances and resulting in substantial errors similar to the situation illustrated in Figure~\ref{FIG:singlegeodesicfailed}. We must acknowledge that this approach cannot detect 100\% of all errors.

\subsection{Opening {\em\&} Closing}
Morphological operations are essential techniques in image processing that can be extended to curves on parametric surfaces. Among these operations, opening and closing are particularly effective for smoothing, noise removal, and feature preservation. Opening is defined as an erosion followed by a dilation, whereas closing is a dilation followed by an erosion. These operations can be efficiently implemented using our offset curve computation method.

Our approach facilitates these morphological operations on parametric surfaces by computing inward and outward offsets at specified distances. Figure~\ref{FIG:openingAndClosing} illustrates an example of opening and closing operations applied to a curve on a parametric surface. As observed, in the closing result, small holes are filled, while in the opening result, small structures are removed, leaving larger structures intact.

\section{Conclusion, Limitations and Future Work}
In this paper, we introduced a novel approach for computing curve offsets on parametric surfaces. Our method is based on two key observations. First, geodesic computation on parametric surfaces is an intrinsic property that remains independent of spatial embedding. By constructing an intrinsic triangulation of the parameter space and equipping it with the surface-induced metric, we can efficiently compute geodesic distances between arbitrary points through intrinsic triangle flipping operations. Second, the offset curve of different primitives is confined within their corresponding Voronoi cells. This property allows us to discretize the source curve, compute its Voronoi diagram, and then extract offsets locally within each cell. Compared to traditional methods, our approach demonstrates superior efficiency and accuracy.

Despite these advantages, our approach has certain limitations. While the initial intrinsic triangulation satisfies the triangle inequality, strictly speaking, the intrinsic triangulation is not necessarily unfoldable. This introduces some precision loss in the geodesic distance computation through edge flipping. Additionally, when flipping geodesic paths using intrinsic triangles, the geodesic path initialization is required. Our current implementation initializes paths independently for each vertex pair during geodesic computation, neglecting potential shared information between different paths that could be exploited to improve performance. Finally, the method we use to obtain geodesic paths based on intrinsic triangle flipping relies on initialization and may not always yield the shortest geodesic length in highly complex scenarios.

For future work, we plan to enhance our algorithm in two primary directions. First, we aim to develop more appropriate strategies for geodesic length computation during intrinsic triangle flipping operations on parametric surfaces, thereby improving the accuracy of our geodesic computations. Second, we intend to eliminate redundant calculations by leveraging shared information between paths, further accelerating the overall computation process.

\section*{Acknowledgments}
The authors would like to thank the anonymous reviewers for their valuable comments and suggestions. This work was supported by the National Key R\&D Program of China (2022YFB3303200), and the National Natural Science Foundation of China (U23A20312, 62272277).

\section*{References}

\bibliography{mybibfile}

\begin{thebibliography}{10}
\expandafter\ifx\csname url\endcsname\relax
  \def\url#1{\texttt{#1}}\fi
\expandafter\ifx\csname urlprefix\endcsname\relax\def\urlprefix{URL }\fi
\expandafter\ifx\csname href\endcsname\relax
  \def\href#1#2{#2} \def\path#1{#1}\fi

\bibitem{doi:10.1177/0278364905059058}
P.~N. Atkar, A.~Greenfield, D.~C. Conner, H.~Choset, A.~A. Rizzi, \href{https://doi.org/10.1177/0278364905059058}{Uniform coverage of automotive surface patches}, The International Journal of Robotics Research 24~(11) (2005) 883--898.
\newblock \href {http://arxiv.org/abs/https://doi.org/10.1177/0278364905059058} {\path{arXiv:https://doi.org/10.1177/0278364905059058}}, \href {http://dx.doi.org/10.1177/0278364905059058} {\path{doi:10.1177/0278364905059058}}.
\newline\urlprefix\url{https://doi.org/10.1177/0278364905059058}

\bibitem{10.1145/77055.77057}
D.~J. Filip, \href{https://doi.org/10.1145/77055.77057}{Blending parametric surfaces}, ACM Trans. Graph. 8~(3) (1989) 164–173.
\newblock \href {http://dx.doi.org/10.1145/77055.77057} {\path{doi:10.1145/77055.77057}}.
\newline\urlprefix\url{https://doi.org/10.1145/77055.77057}

\bibitem{LEE2003511}
E.~Lee, \href{https://www.sciencedirect.com/science/article/pii/S0010448501001853}{Contour offset approach to spiral toolpath generation with constant scallop height}, Computer-Aided Design 35~(6) (2003) 511--518.
\newblock \href {http://dx.doi.org/https://doi.org/10.1016/S0010-4485(01)00185-3} {\path{doi:https://doi.org/10.1016/S0010-4485(01)00185-3}}.
\newline\urlprefix\url{https://www.sciencedirect.com/science/article/pii/S0010448501001853}

\bibitem{do2016differential}
M.~P. Do~Carmo, Differential geometry of curves and surfaces: revised and updated second edition, Courier Dover Publications, 2016.

\bibitem{Beck198618}
J.~M. Beck, R.~T. Farouki, J.~K. Hinds, \href{https://www.scopus.com/inward/record.uri?eid=2-s2.0-0022899860&doi=10.1109%2fMCG.1986.276587&partnerID=40&md5=146acb2a22ff312c7b84522f21d148b4}{Surface analysis methods}, IEEE Computer Graphics and Applications 6~(12) (1986) 18 – 36, cited by: 88.
\newblock \href {http://dx.doi.org/10.1109/MCG.1986.276587} {\path{doi:10.1109/MCG.1986.276587}}.
\newline\urlprefix\url{https://www.scopus.com/inward/record.uri?eid=2-s2.0-0022899860&doi=10.1109%2fMCG.1986.276587&partnerID=40&md5=146acb2a22ff312c7b84522f21d148b4}

\bibitem{Patrikalakis198939}
N.~Patrikalakis, L.~Bardis, \href{https://www.scopus.com/inward/record.uri?eid=2-s2.0-0024776513&doi=10.1007%2fBF01201996&partnerID=40&md5=b01cee98c61763408a632607a6a28eea}{Offsets of curves on rational b-spline surfaces}, Engineering with Computers 5~(1) (1989) 39 – 46, cited by: 51.
\newblock \href {http://dx.doi.org/10.1007/BF01201996} {\path{doi:10.1007/BF01201996}}.
\newline\urlprefix\url{https://www.scopus.com/inward/record.uri?eid=2-s2.0-0024776513&doi=10.1007%2fBF01201996&partnerID=40&md5=b01cee98c61763408a632607a6a28eea}

\bibitem{doi:10.1137/0911014}
J.~Sneyd, C.~S. Peskin, \href{https://doi.org/10.1137/0911014}{Computation of geodesic trajectories on tubular surfaces}, SIAM Journal on Scientific and Statistical Computing 11~(2) (1990) 230--241.
\newblock \href {http://arxiv.org/abs/https://doi.org/10.1137/0911014} {\path{arXiv:https://doi.org/10.1137/0911014}}, \href {http://dx.doi.org/10.1137/0911014} {\path{doi:10.1137/0911014}}.
\newline\urlprefix\url{https://doi.org/10.1137/0911014}

\bibitem{10.1115/1.2826919}
T.~Maekawa, \href{https://doi.org/10.1115/1.2826919}{Computation of shortest paths on free-form parametric surfaces}, Journal of Mechanical Design 118~(4) (1996) 499--508.
\newblock \href {http://arxiv.org/abs/https://asmedigitalcollection.asme.org/mechanicaldesign/article-pdf/118/4/499/5666706/499\_1.pdf} {\path{arXiv:https://asmedigitalcollection.asme.org/mechanicaldesign/article-pdf/118/4/499/5666706/499\_1.pdf}}, \href {http://dx.doi.org/10.1115/1.2826919} {\path{doi:10.1115/1.2826919}}.
\newline\urlprefix\url{https://doi.org/10.1115/1.2826919}

\bibitem{KASAP20051206}
E.~Kasap, M.~Yapici, F.~T. Akyildiz, \href{https://www.sciencedirect.com/science/article/pii/S0096300305001700}{A numerical study for computation of geodesic curves}, Applied Mathematics and Computation 171~(2) (2005) 1206--1213.
\newblock \href {http://dx.doi.org/https://doi.org/10.1016/j.amc.2005.01.109} {\path{doi:https://doi.org/10.1016/j.amc.2005.01.109}}.
\newline\urlprefix\url{https://www.sciencedirect.com/science/article/pii/S0096300305001700}

\bibitem{tucker1997forming}
C.~L. Tucker, Forming of advanced composites (1997).

\bibitem{RAVIKUMAR2003119}
G.~{Ravi Kumar}, P.~Srinivasan, V.~{Devaraja Holla}, K.~Shastry, B.~Prakash, \href{https://www.sciencedirect.com/science/article/pii/S0167839603000232}{Geodesic curve computations on surfaces}, Computer Aided Geometric Design 20~(2) (2003) 119--133.
\newblock \href {http://dx.doi.org/https://doi.org/10.1016/S0167-8396(03)00023-2} {\path{doi:https://doi.org/10.1016/S0167-8396(03)00023-2}}.
\newline\urlprefix\url{https://www.sciencedirect.com/science/article/pii/S0167839603000232}

\bibitem{polthier2006straightest}
K.~Polthier, M.~Schmies, Straightest geodesics on polyhedral surfaces, in: ACM SIGGRAPH 2006 Courses, 2006, pp. 30--38.

\bibitem{KANAI2001801}
T.~Kanai, H.~Suzuki, \href{https://www.sciencedirect.com/science/article/pii/S0010448501000975}{Approximate shortest path on a polyhedral surface and its applications}, Computer-Aided Design 33~(11) (2001) 801--811.
\newblock \href {http://dx.doi.org/https://doi.org/10.1016/S0010-4485(01)00097-5} {\path{doi:https://doi.org/10.1016/S0010-4485(01)00097-5}}.
\newline\urlprefix\url{https://www.sciencedirect.com/science/article/pii/S0010448501000975}

\bibitem{MARTINEZ2005667}
D.~Martínez, L.~Velho, P.~C. Carvalho, \href{https://www.sciencedirect.com/science/article/pii/S0097849305001299}{Computing geodesics on triangular meshes}, Computers \& Graphics 29~(5) (2005) 667--675.
\newblock \href {http://dx.doi.org/https://doi.org/10.1016/j.cag.2005.08.003} {\path{doi:https://doi.org/10.1016/j.cag.2005.08.003}}.
\newline\urlprefix\url{https://www.sciencedirect.com/science/article/pii/S0097849305001299}

\bibitem{Surazhsky2005553}
V.~Surazhsky, T.~Surazhsky, D.~Kirsanov, S.~J. Gortler, H.~Hoppe, \href{https://www.scopus.com/inward/record.uri?eid=2-s2.0-33644603193&doi=10.1145%2f1073204.1073228&partnerID=40&md5=02c61c3a021f372209e094ca00c47d05}{Fast exact and approximate geodesics on meshes}, Vol.~24, 2005, p. 553 – 560, cited by: 450; All Open Access, Green Open Access.
\newblock \href {http://dx.doi.org/10.1145/1073204.1073228} {\path{doi:10.1145/1073204.1073228}}.
\newline\urlprefix\url{https://www.scopus.com/inward/record.uri?eid=2-s2.0-33644603193&doi=10.1145%2f1073204.1073228&partnerID=40&md5=02c61c3a021f372209e094ca00c47d05}

\bibitem{10.1145/3414685.3417839}
N.~Sharp, K.~Crane, \href{https://doi.org/10.1145/3414685.3417839}{You can find geodesic paths in triangle meshes by just flipping edges}, ACM Trans. Graph. 39~(6).
\newblock \href {http://dx.doi.org/10.1145/3414685.3417839} {\path{doi:10.1145/3414685.3417839}}.
\newline\urlprefix\url{https://doi.org/10.1145/3414685.3417839}

\bibitem{BOSE2011486}
P.~Bose, A.~Maheshwari, C.~Shu, S.~Wuhrer, \href{https://www.sciencedirect.com/science/article/pii/S0925772111000459}{A survey of geodesic paths on 3d surfaces}, Computational Geometry 44~(9) (2011) 486--498.
\newblock \href {http://dx.doi.org/https://doi.org/10.1016/j.comgeo.2011.05.006} {\path{doi:https://doi.org/10.1016/j.comgeo.2011.05.006}}.
\newline\urlprefix\url{https://www.sciencedirect.com/science/article/pii/S0925772111000459}

\bibitem{10.1145/116873.116880}
F.~Aurenhammer, \href{https://doi.org/10.1145/116873.116880}{Voronoi diagrams—a survey of a fundamental geometric data structure}, ACM Comput. Surv. 23~(3) (1991) 345–405.
\newblock \href {http://dx.doi.org/10.1145/116873.116880} {\path{doi:10.1145/116873.116880}}.
\newline\urlprefix\url{https://doi.org/10.1145/116873.116880}

\bibitem{586019}
G.~Elber, I.-K. Lee, M.-S. Kim, Comparing offset curve approximation methods, IEEE Computer Graphics and Applications 17~(3) (1997) 62--71.
\newblock \href {http://dx.doi.org/10.1109/38.586019} {\path{doi:10.1109/38.586019}}.

\bibitem{MAEKAWA1998437}
T.~Maekawa, N.~M. Patrikalakis, T.~Sakkalis, G.~Yu, \href{https://www.sciencedirect.com/science/article/pii/S0167839697000423}{Analysis and applications of pipe surfaces}, Computer Aided Geometric Design 15~(5) (1998) 437--458.
\newblock \href {http://dx.doi.org/https://doi.org/10.1016/S0167-8396(97)00042-3} {\path{doi:https://doi.org/10.1016/S0167-8396(97)00042-3}}.
\newline\urlprefix\url{https://www.sciencedirect.com/science/article/pii/S0167839697000423}

\bibitem{PHAM1992223}
B.~Pham, \href{https://www.sciencedirect.com/science/article/pii/001044859290059J}{Offset curves and surfaces: a brief survey}, Computer-Aided Design 24~(4) (1992) 223--229.
\newblock \href {http://dx.doi.org/https://doi.org/10.1016/0010-4485(92)90059-J} {\path{doi:https://doi.org/10.1016/0010-4485(92)90059-J}}.
\newline\urlprefix\url{https://www.sciencedirect.com/science/article/pii/001044859290059J}

\bibitem{10.5555/108340}
M.~Held, On the computational geometry of pocket machining, Springer-Verlag, Berlin, Heidelberg, 1991.

\bibitem{Patrikalakis1989OffsetsOC}
N.~M. Patrikalakis, L.~Bardis, \href{https://api.semanticscholar.org/CorpusID:26591103}{Offsets of curves on rational b-spline surfaces}, Engineering with Computers 5 (1989) 39--46.
\newline\urlprefix\url{https://api.semanticscholar.org/CorpusID:26591103}

\bibitem{10.1007/BF01200103}
F.-E. Wolter, S.~T. Tuohy, \href{https://doi.org/10.1007/BF01200103}{Approximation of high-degree and procedural curves}, Eng. with Comput. 8~(2) (1992) 61–80.
\newblock \href {http://dx.doi.org/10.1007/BF01200103} {\path{doi:10.1007/BF01200103}}.
\newline\urlprefix\url{https://doi.org/10.1007/BF01200103}

\bibitem{10.5555/647587.730878}
G.~Brunnett, Geometric modeling of parallel curves on surfaces, in: Geometric Modelling, Springer-Verlag, Berlin, Heidelberg, 1999, p. 37–53.

\bibitem{GeodesicOffsetsOfSpline}
D.-E. Ulmet, Geodesic offsets of spline curves on spline surfaces. – an industrial perspective.

\bibitem{Tam2004AGA}
H.-Y. Tam, H.-W. Law, H.~Xu, \href{https://api.semanticscholar.org/CorpusID:32309998}{A geometric approach to the offsetting of profiles on three-dimensional surfaces}, Comput. Aided Des. 36 (2004) 887--902.
\newline\urlprefix\url{https://api.semanticscholar.org/CorpusID:32309998}

\bibitem{XU2015131}
J.~Xu, Y.~Sun, L.~Zhang, \href{https://www.sciencedirect.com/science/article/pii/S0010448514002814}{A mapping-based approach to eliminating self-intersection of offset paths on mesh surfaces for cnc machining}, Computer-Aided Design 62 (2015) 131--142.
\newblock \href {http://dx.doi.org/https://doi.org/10.1016/j.cad.2014.11.010} {\path{doi:https://doi.org/10.1016/j.cad.2014.11.010}}.
\newline\urlprefix\url{https://www.sciencedirect.com/science/article/pii/S0010448514002814}

\bibitem{doi:10.1177/0954405413492965}
J.~Xu, Y.~Wang, X.~Zhang, S.~Chang, \href{https://doi.org/10.1177/0954405413492965}{Contour-parallel tool path generation for three-axis mesh surface machining based on one-step inverse forming}, Proceedings of the Institution of Mechanical Engineers, Part B: Journal of Engineering Manufacture 227~(12) (2013) 1800--1807.
\newblock \href {http://arxiv.org/abs/https://doi.org/10.1177/0954405413492965} {\path{arXiv:https://doi.org/10.1177/0954405413492965}}, \href {http://dx.doi.org/10.1177/0954405413492965} {\path{doi:10.1177/0954405413492965}}.
\newline\urlprefix\url{https://doi.org/10.1177/0954405413492965}

\bibitem{FENG2002647}
H.-Y. Feng, H.~Li, \href{https://www.sciencedirect.com/science/article/pii/S0010448501001361}{Constant scallop-height tool path generation for three-axis sculptured surface machining}, Computer-Aided Design 34~(9) (2002) 647--654.
\newblock \href {http://dx.doi.org/https://doi.org/10.1016/S0010-4485(01)00136-1} {\path{doi:https://doi.org/10.1016/S0010-4485(01)00136-1}}.
\newline\urlprefix\url{https://www.sciencedirect.com/science/article/pii/S0010448501001361}

\bibitem{10.1115/1.2826244}
R.~Sarma, D.~Dutta, \href{https://doi.org/10.1115/1.2826244}{The geometry and generation of nc tool paths}, Journal of Mechanical Design 119~(2) (1997) 253--258.
\newblock \href {http://arxiv.org/abs/https://asmedigitalcollection.asme.org/mechanicaldesign/article-pdf/119/2/253/5920524/253\_1.pdf} {\path{arXiv:https://asmedigitalcollection.asme.org/mechanicaldesign/article-pdf/119/2/253/5920524/253\_1.pdf}}, \href {http://dx.doi.org/10.1115/1.2826244} {\path{doi:10.1115/1.2826244}}.
\newline\urlprefix\url{https://doi.org/10.1115/1.2826244}

\bibitem{10.1115/1.2901938}
K.~Suresh, D.~C.~H. Yang, \href{https://doi.org/10.1115/1.2901938}{Constant scallop-height machining of free-form surfaces}, Journal of Engineering for Industry 116~(2) (1994) 253--259.
\newblock \href {http://arxiv.org/abs/https://asmedigitalcollection.asme.org/manufacturingscience/article-pdf/116/2/253/6507498/253\_1.pdf} {\path{arXiv:https://asmedigitalcollection.asme.org/manufacturingscience/article-pdf/116/2/253/6507498/253\_1.pdf}}, \href {http://dx.doi.org/10.1115/1.2901938} {\path{doi:10.1115/1.2901938}}.
\newline\urlprefix\url{https://doi.org/10.1115/1.2901938}

\bibitem{10845125}
P.~Wang, J.~Song, L.~Wang, S.~Xin, D.-M. Yan, S.~Chen, C.~Tu, W.~Wang, Towards voronoi diagrams of surface patches, IEEE Transactions on Visualization and Computer Graphics (2025) 1--15\href {http://dx.doi.org/10.1109/TVCG.2025.3531445} {\path{doi:10.1109/TVCG.2025.3531445}}.

\bibitem{ComputationofMedialCurvesinSurfaces}
T.~Rausch, F.-E. Wolter, O.~Sniehotta, Computation of Medial Curves in Surfaces, Vol.~7, 2006.

\bibitem{HOLLA20031099}
V.~Holla, K.~Shastry, B.~Prakash, \href{https://www.sciencedirect.com/science/article/pii/S0010448502001811}{Offset of curves on tessellated surfaces}, Computer-Aided Design 35~(12) (2003) 1099--1108.
\newblock \href {http://dx.doi.org/https://doi.org/10.1016/S0010-4485(02)00181-1} {\path{doi:https://doi.org/10.1016/S0010-4485(02)00181-1}}.
\newline\urlprefix\url{https://www.sciencedirect.com/science/article/pii/S0010448502001811}

\bibitem{doi:10.1073/pnas.95.15.8431}
R.~Kimmel, J.~A. Sethian, \href{https://www.pnas.org/doi/abs/10.1073/pnas.95.15.8431}{Computing geodesic paths on manifolds}, Proceedings of the National Academy of Sciences 95~(15) (1998) 8431--8435.
\newblock \href {http://arxiv.org/abs/https://www.pnas.org/doi/pdf/10.1073/pnas.95.15.8431} {\path{arXiv:https://www.pnas.org/doi/pdf/10.1073/pnas.95.15.8431}}, \href {http://dx.doi.org/10.1073/pnas.95.15.8431} {\path{doi:10.1073/pnas.95.15.8431}}.
\newline\urlprefix\url{https://www.pnas.org/doi/abs/10.1073/pnas.95.15.8431}

\bibitem{7102776}
Y.-J. Liu, Semi-continuity of skeletons in two-manifold and discrete voronoi approximation, IEEE Transactions on Pattern Analysis and Machine Intelligence 37~(9) (2015) 1938--1944.
\newblock \href {http://dx.doi.org/10.1109/TPAMI.2015.2430342} {\path{doi:10.1109/TPAMI.2015.2430342}}.

\bibitem{XIN20111468}
S.-Q. Xin, X.~Ying, Y.~He, \href{https://www.sciencedirect.com/science/article/pii/S0010448511002260}{Efficiently computing geodesic offsets on triangle meshes by the extended xin–wang algorithm}, Computer-Aided Design 43~(11) (2011) 1468--1476, solid and Physical Modeling 2011.
\newblock \href {http://dx.doi.org/https://doi.org/10.1016/j.cad.2011.08.027} {\path{doi:https://doi.org/10.1016/j.cad.2011.08.027}}.
\newline\urlprefix\url{https://www.sciencedirect.com/science/article/pii/S0010448511002260}

\bibitem{Sharp:2019:NIT}
N.~Sharp, Y.~Soliman, K.~Crane, Navigating intrinsic triangulations, ACM Trans. Graph. 38~(4).

\bibitem{Li2023}
X.~Li, S.~McMains, \href{https://www.sciopen.com/article/10.1007/s41095-021-0265-1}{A voronoi diagram approach for detecting defects in 3d printed fiber-reinforced polymers from microscope images}, Computational Visual Media 9~(1) (2023) 41--56.
\newblock \href {http://dx.doi.org/10.1007/s41095-021-0265-1} {\path{doi:10.1007/s41095-021-0265-1}}.
\newline\urlprefix\url{https://www.sciopen.com/article/10.1007/s41095-021-0265-1}

\bibitem{Meng2023}
W.~Meng, P.~Bo, X.~Zhang, J.~Hong, S.~Xin, C.~Tu, \href{https://www.sciopen.com/article/10.1007/s41095-022-0326-0}{An efficient algorithm for approximate voronoi diagram constructionon triangulated surfaces}, Computational Visual Media 9~(3) (2023) 443--459.
\newblock \href {http://dx.doi.org/10.1007/s41095-022-0326-0} {\path{doi:10.1007/s41095-022-0326-0}}.
\newline\urlprefix\url{https://www.sciopen.com/article/10.1007/s41095-022-0326-0}

\bibitem{wang2024efficientnearestneighborsearch}
P.~Wang, J.~Song, S.~Xin, S.~Chen, C.~Tu, W.~Wang, J.~Wang, \href{https://arxiv.org/abs/2409.15023}{Efficient nearest neighbor search using dynamic programming} (2024).
\newblock \href {http://arxiv.org/abs/2409.15023} {\path{arXiv:2409.15023}}.
\newline\urlprefix\url{https://arxiv.org/abs/2409.15023}

\bibitem{xin2022surfacevoronoi}
S.~Xin, P.~Wang, R.~Xu, D.~Yan, S.~Chen, W.~Wang, C.~Zhang, C.~Tu, Surfacevoronoi: Efficiently computing voronoi diagrams over mesh surfaces with arbitrary distance solvers, ACM Transactions on Graphics (TOG) 41~(6) (2022) 1--12.

\end{thebibliography}

\end{document}